\newcommand{\coloneq}{\mathrel{\mathop:}=}
\newcommand{\intersect}{\cap}
\newcommand{\union}{\cup}
\newcommand{\bigunion}{\bigcup}
\newcommand{\delete}{\setminus}
\newcommand{\ie}{i.e.}
\newcommand{\st}{\;:\;}
\newcommand{\superscript}[1]{\ensuremath{^{\textrm{#1}}}}
\newcommand{\dotdot}{\ldotp\ldotp}
\newcommand{\bbN}{\mathbb{N}}
\newcommand{\calO}{\mathcal{O}}
\providecommand{\abs}[1]{\left\lvert#1\right\rvert}
\DeclareMathOperator{\LSD}{LSD}
\DeclareMathOperator{\USD}{USD}
\DeclareMathOperator{\SD}{SD}
\DeclareMathOperator{\FBE}{FBE}
\begin{document}

\title{Similarity density of the Thue-Morse word with overlap-free infinite binary words}
\author{Chen Fei Du and Jeffrey Shallit 
\institute{School of Computer Science,
University of Waterloo,
Waterloo, ON N2L 3G1, Canada}
\email{cfdu@uwaterloo.ca}, \ \email{shallit@uwaterloo.ca}}
\def\titlerunning{Similarity Density}
\def\authorrunning{C. F. Du and J. Shallit}

\maketitle

\theoremstyle{plain}
\newtheorem{theorem}{Theorem}
\newtheorem{thm}[theorem]{Theorem}
\newtheorem{corollary}[theorem]{Corollary}
\newtheorem{cor}[theorem]{Corollary}
\newtheorem{lemma}[theorem]{Lemma}
\newtheorem{lem}[theorem]{Lemma}
\newtheorem{proposition}[theorem]{Proposition}
\newtheorem{prop}[theorem]{Proposition}
\newtheorem{observation}[theorem]{Observation}
\newtheorem{obsv}[theorem]{Observation}

\theoremstyle{definition}
\newtheorem{definition}[theorem]{Definition}
\newtheorem{defn}[theorem]{Definition}
\newtheorem{example}[theorem]{Example}
\newtheorem{eg}[theorem]{Example}
\newtheorem{problem}[theorem]{Problem}
\newtheorem{prob}[theorem]{Problem}
\newtheorem{question}[theorem]{Question}
\newtheorem{quest}[theorem]{Question}
\newtheorem{conjecture}[theorem]{Conjecture}
\newtheorem{conj}[theorem]{Conjecture}

\theoremstyle{remark}
\newtheorem{remark}[theorem]{Remark}
\newtheorem{rem}[theorem]{Remark}

\begin{abstract}
We consider a measure of similarity for infinite words that generalizes the notion of asymptotic or natural density of subsets of natural numbers from number theory.
We show that every overlap-free infinite binary word, other than the Thue-Morse word ${\bf t}$ and its complement $\overline{\bf t}$, has this measure of similarity with ${\bf t}$ between $\frac{1}{4}$ and $\frac{3}{4}$.
This is a partial generalization of a classical 1927 result of Mahler.
\end{abstract}

\section{Introduction}

The Thue-Morse word
\[
{\bf t} = {\tt 01101001100101101001011001101001} \cdots
\]
is one of the most studied objects in combinatorics on words. It can be defined in a number of different ways, such as the fixed point of the morphism $\mu$ defined by $\mu({\tt 0}) \coloneq {\tt 01}$ and $\mu({\tt 1}) \coloneq {\tt 10}$ beginning with ${\tt 0}$, or as the word whose $n$\superscript{th} position is the number of {\tt 1}s (modulo 2) in the binary representation of $n$.

The word $\bf t$ has a large number of interesting properties, many of which are covered in the survey \cite{Allouche&Shallit:1999}. For example, $\bf t$ is {\it overlap-free}: it contains no factor of the form $axaxa$, where $x$ is a (possibly empty) word and $a$ is a single letter.
One that concerns us here is the following ``fragility'' property \cite{Brown&Rampersad&Shallit&Vasiga:2006}: if the bits in any {\it finite} non-empty set of positions are ``flipped'' (\ie, changed to their binary complement) in the 
Thue-Morse word,
the resulting word is no longer overlap-free.\footnote{Note
that the ``fragility'' property does not hold for an arbitrary overlap-free binary word; for example, both ${\tt 0} {\bf t}$ and ${\tt 1} {\bf t}$ are overlap-free. There are even overlap-free words in which blocks arbitrarily far from the beginning may be flipped and still remain 
overlap-free \cite{Shallit:2011}.}

Of course, this is not true of arbitrary {\it infinite} sets of positions; for example, we can transform $\bf t$ to $\overline{\bf t}$ by flipping {\it all} the positions.
Chao Hsien Lin (personal communication, October 2013) raised the following natural question.

\begin{prob} \label{prob:1}
Is it possible to flip an {\it infinite}, but density $0$, set of positions in {\bf t} and still get an overlap-free word?
\end{prob}

Our main result (Theorem~\ref{thm:main}) solves Problem~\ref{prob:1} in the negative.
After making precise what we mean by ``density'', we use a
certain automaton \cite{Shallit:2011} encoding all the
overlap-free infinite binary words to compare ${\bf t}$ to all other overlap-free infinite binary words and show that they differ from ${\bf t}$ in at least density $\frac{1}{4}$ of the positions.
Furthermore, computational evidence suggests
that the true lower bound is density $\frac{1}{3}$.
However, we were unable to obtain a proof of this tighter bound.
Finally, we consider the possibility of similar results holding for other words (in place of ${\bf t}$) or for larger classes of words (in place of overlap-free words).

\section{Notation}

We observe the following notational conventions throughout this paper.
We let $\bbN \coloneq \{ 0,1,2, \dots \}$ denote the natural numbers.
The upper-case Greek letters $\Sigma, \Delta, \Gamma$ represent finite alphabets.
For each $n \in \bbN$, we let $\Sigma_n \coloneq \{ 0, 1, 2, \dots, n-1 \}$.

As usual, $\Sigma^\omega$ denotes the set of all (right-)infinite words over $\Sigma$ and $L^\omega \coloneq \{ x_0 x_1 x_2 \cdots \ \st \ x_i \in L \delete \{ \epsilon \} \}$ denote the set of all infinite words formed by concatenation from nonempty words of $L$.  By $x^\omega$ we mean the infinite periodic word $xxx \cdots$.

We adopt the convention that, in the context of words, lower-case letters such as $x,y,z$ refer to finite words (\ie, $x,y,z \in \Sigma^*$), while boldface letters ${\bf x}, {\bf y}, {\bf z}$ refer to infinite words (\ie, ${\bf x}, {\bf y}, {\bf z} \in \Sigma^\omega$).

To be consistent with $0 \in \bbN$, all words are zero-indexed, \ie, the first letter of the word is in position $0$.
For $x \in \Sigma^*$ and $m \leq n \in
\bbN$, $x[n]$ denotes the letter at the $n$\superscript{th} position of $x$ and $x[m\dotdot n]$ denotes the subword consisting of the letters from the $m$\superscript{th} through $n$\superscript{th} positions (inclusive) of $x$.
For $x \in \Sigma_2^*$, $\overline{x}$ denotes the binary complement of $x$, \ie, the word obtained by changing all {\tt 0}s to {\tt 1}s and vice versa.
We use the same notation just described for infinite words.
In addition, for ${\bf x} \in \Sigma^\omega$ and $n \in \bbN$, ${\bf x}[n\dotdot\infty]$ denotes the (infinite) suffix of ${\bf x}$ starting from the $n$\superscript{th} position of ${\bf x}$.

For a morphism $g : \Sigma^* \to \Sigma^*$ and $n \in \bbN$, we let $g^n$ denote the $n$-fold composition of $g$, and $g^\omega : \Sigma^* \to \Sigma^\omega$ denote $\lim_{n\to\infty} g^n$ if the limit exists.
The Thue-Morse morphism $\mu : \Sigma_2^* \to \Sigma_2^*$ is defined by $\mu({\tt 0}) \coloneq {\tt 01}$ and $\mu({\tt 1}) \coloneq {\tt 10}$.
Iterates of the Thue-Morse morphism acting on ${\tt 0}$ are denoted by $t_n \coloneq \mu^n({\tt 0})$.
Note that ${\bf t} = \mu^\omega({\tt 0})$.

\section{Similarity density of words}

Let us express Problem~\ref{prob:1} in another way: 
how similar can an arbitrary overlap-free word $\bf w$ be to ${\bf t}$?
For ${\bf w}$ a shift of ${\bf t}$, this was essentially determined by the following result from a surprisingly little-known 1927 paper of Kurt Mahler on autocorrelation \cite{Mahler:1927}.

\begin{theorem} \label{thm:mahler-autocorrelation}
For all $k \in \bbN$, the limit
\[
\sigma(k) \coloneq \lim_{n \rightarrow \infty} \frac{1}{n} \sum_{i=0}^{n-1} (-1)^{{\bf t}[i]+{\bf t}[i+k]}
\]
exists.
Furthermore, we have
$\sigma(0) = 1$, $\sigma(1) = -\frac{1}{3}$, and for all $n \in \bbN$, $\sigma(2n) = \sigma(n)$ and $\sigma(2n+1) = -\frac{1}{2}(\sigma(n) + \sigma(n+1))$.
\end{theorem}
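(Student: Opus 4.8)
The claim splits naturally into two parts: (i) existence of the limit $\sigma(k)$ for every $k$, and (ii) the closed-form recurrences together with the base cases $\sigma(0)=1$, $\sigma(1)=-\tfrac13$. My plan is to derive the recurrences first as formal identities among the partial sums, using only the self-similar structure ${\bf t}=\mu({\bf t})$, and then to deduce existence of the limit as a byproduct of those recurrences plus a direct verification in the two base cases.

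The main tool is the substitution structure. Writing ${\bf t}=\mu({\bf t})$ means ${\bf t}[2j]={\bf t}[j]$ and ${\bf t}[2j+1]=\overline{{\bf t}[j]}=1-{\bf t}[j]$ for all $j$. Set $S_n(k)\coloneq\sum_{i=0}^{n-1}(-1)^{{\bf t}[i]+{\bf t}[i+k]}$, so $\sigma(k)=\lim_n S_n(k)/n$ if it exists. First I would handle the even case: in $S_{2n}(2k)$ split the index $i$ into evens $i=2j$ and odds $i=2j+1$ with $0\le j<n$. For $i=2j$ the summand is $(-1)^{{\bf t}[2j]+{\bf t}[2j+2k]}=(-1)^{{\bf t}[j]+{\bf t}[j+k]}$, and for $i=2j+1$ it is $(-1)^{{\bf t}[2j+1]+{\bf t}[2j+2k+1]}=(-1)^{(1-{\bf t}[j])+(1-{\bf t}[j+k])}=(-1)^{{\bf t}[j]+{\bf t}[j+k]}$ again. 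Hence $S_{2n}(2k)=2S_n(k)$, which on dividing by $2n$ yields $\sigma(2k)=\sigma(k)$ once limits are known to exist. For the odd case I would compute $S_{2n}(2k+1)$ similarly: the even-index terms $i=2j$ give $(-1)^{{\bf t}[2j]+{\bf t}[2j+2k+1]}=(-1)^{{\bf t}[j]+(1-{\bf t}[j+k])}=-(-1)^{{\bf t}[j]+{\bf t}[j+k]}$, while the odd-index terms $i=2j+1$ give $(-1)^{{\bf t}[2j+1]+{\bf t}[2j+2k+2]}=(-1)^{(1-{\bf t}[j])+{\bf t}[j+k+1]}=-(-1)^{{\bf t}[j]+{\bf t}[j+k+1]}$. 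Summing, $S_{2n}(2k+1)=-\bigl(S_n(k)+S_n(k+1)\bigr)$, which gives $\sigma(2k+1)=-\tfrac12\bigl(\sigma(k)+\sigma(k+1)\bigr)$.

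**Existence of the limit.** These recurrences reduce everything to $\sigma(0)$ and $\sigma(1)$. For $\sigma(0)$ the summand is identically $1$, so $S_n(0)=n$ and $\sigma(0)=1$. For $\sigma(1)$ I would either invoke a direct counting argument — the density of positions $i$ with ${\bf t}[i]={\bf t}[i+1]$ in ${\bf t}$ is known, giving $\sigma(1)=-\tfrac13$ — or, more in the spirit of the paper, note that the relation $S_{2n}(1)=-\bigl(S_n(0)+S_n(1)\bigr)=-n-S_n(1)$ together with a bound on $|S_{2n+1}(1)-S_{2n}(1)|\le 2$ shows the sequence $S_n(1)/n$ is Cauchy and forces the limit to satisfy $2x=-1-x$, hence $x=-\tfrac13$. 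The cleanest way to organize existence in general is to show directly that the doubly-indexed family $a_{m,k}\coloneq S_{2^m}(k)/2^m$ converges as $m\to\infty$ (the recurrences make $(a_{m,k})_m$ Cauchy uniformly in $k$, since one step of the recurrence contracts the relevant differences by a factor $\tfrac12$), and then to upgrade convergence along powers of two to convergence along all $n$ using the trivial estimate $|S_{n+1}(k)-S_n(k)|\le 2$ and a standard interpolation between consecutive powers of two.

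**Anticipated obstacle.** The recurrence manipulations are routine; the one genuinely delicate point is upgrading convergence along the subsequence $n=2^m$ to convergence along all $n$. A naive interpolation between $2^m$ and $2^{m+1}$ loses a constant factor and does not immediately give a Cauchy sequence in $n$. The fix is to write an arbitrary $n$ in binary and unfold the recurrence digit by digit: each $S_n(k)$ decomposes as a signed combination of $O(\log n)$ terms of the form $\pm S_{n_j}(k_j)$ with $n_j$ a power of two, the signs and shifts dictated by the binary digits of $n$ and $k$, plus a bounded error. Controlling the accumulation of these $O(\log n)$ bounded errors after dividing by $n$ is what makes the limit exist; this bookkeeping is the technical heart of the argument, and I would present it as a short lemma on the self-affine structure of the partial sums before assembling the final statement.
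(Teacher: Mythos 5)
First, note that the paper does not prove this theorem at all: it is Mahler's 1927 result, imported by citation, so there is no internal proof to compare your argument against; what you have written is a self-contained derivation, and its core is correct. The identities $S_{2n}(2k)=2S_n(k)$ and $S_{2n}(2k+1)=-\bigl(S_n(k)+S_n(k+1)\bigr)$, computed from ${\bf t}[2j]={\bf t}[j]$ and ${\bf t}[2j+1]=1-{\bf t}[j]$, are right, as are $\sigma(0)=1$ and the fixed-point value $-\tfrac13$ for $\sigma(1)$.

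Two points in the existence argument need repair. (i) Your justification for convergence of $a_{m,k}=S_{2^m}(k)/2^m$ (``one step of the recurrence contracts by $\tfrac12$, uniformly in $k$'') is not right as stated: the even branch $a_{m+1,2k}=a_{m,k}$ is not a contraction, it merely moves the shift index. The correct organization is strong induction on $k$: the only genuine contraction is at the self-referential base case $a_{m+1,1}=-\tfrac12-\tfrac12 a_{m,1}$, and for $k\ge 2$ a single unfolding step plus the inductive hypothesis at the strictly smaller shifts $\lfloor k/2\rfloor,\lceil k/2\rceil$ suffices. In fact this induction already works for arbitrary $n$, not just powers of two, since one unfolding of $S_m(k)$ costs only an $O(1)$ parity error; only $k=1$ needs iteration, where $\bigl|S_m(1)/m+\tfrac13\bigr|\le\tfrac12\bigl|S_{\lfloor m/2\rfloor}(1)/\lfloor m/2\rfloor+\tfrac13\bigr|+O(1/m)$ telescopes to $O(\log m/m)$. (ii) The digit-by-digit unfolding you propose for general $n$ does not have the structure you claim: iterating the halving recurrence doubles the total $\ell_1$-weight of the coefficients at every level, so the per-level parity errors accumulate to $O(n)$ by the time the lengths reach powers of two, leaving an error of constant order after dividing by $n$. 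The workable version of your idea is to split $[0,n)$ into the dyadic blocks given by the binary expansion of $n$ and use the carry-free identity ${\bf t}[2^mq+i]={\bf t}[q]\oplus{\bf t}[i]$ for $0\le i<2^m$: the ${\bf t}[q]$ contributions cancel in the exponent, so a block of length $2^m$ contributes $S_{2^m}(k)+O(k)$ (the $O(k)$ from the at most $k$ indices with $i+k\ge 2^m$), with no sign changes and the shift unchanged. Hence $S_n(k)=\sum_j S_{2^{m_j}}(k)+O(k\log n)$, and $S_n(k)/n$ is a convex combination of the $a_{m_j,k}$ in which small blocks carry $o(1)$ weight, so convergence along powers of two transfers to all $n$. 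With either repair your plan becomes a complete proof; as written, the final step is the gap.
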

(Also see \cite{Yarlagadda&Hershey:1984,Yarlagadda&Hershey:1990}.)
Then an easy induction on $k$ gives

\begin{corollary} \label{cor:mahler-autocorrelation}
For all $k \in \bbN \delete \{0\}$, $-\frac{1}{3} \leq \sigma(k) \leq \frac{1}{3}$.
\end{corollary}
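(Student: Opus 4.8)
The plan is to prove the statement by strong induction on $k \geq 1$, using only the recurrences for $\sigma$ supplied by Theorem~\ref{thm:mahler-autocorrelation}. The base case is $k = 1$: here $\sigma(1) = -\frac{1}{3}$, so $|\sigma(1)| = \frac{1}{3}$ and the bound holds.

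For the inductive step, I would fix $k \geq 2$ and assume $-\frac{1}{3} \leq \sigma(j) \leq \frac{1}{3}$ for every $j$ with $1 \leq j < k$. If $k$ is even, write $k = 2n$ with $n \geq 1$; then $1 \leq n \leq k - 1$, and $\sigma(k) = \sigma(n)$ lies in $[-\frac{1}{3}, \frac{1}{3}]$ by the inductive hypothesis. If $k$ is odd, write $k = 2n + 1$; since $k \geq 2$ we in fact have $k \geq 3$, hence $n \geq 1$, and $1 \leq n < n+1 \leq k-1$ (the last inequality being equivalent to $k \geq 3$). Thus $|\sigma(n)|, |\sigma(n+1)| \leq \frac{1}{3}$, and from $\sigma(k) = -\frac{1}{2}(\sigma(n) + \sigma(n+1))$ the triangle inequality gives $|\sigma(k)| \leq \frac{1}{2}(|\sigma(n)| + |\sigma(n+1)|) \leq \frac{1}{3}$. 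This closes the induction.

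The one point requiring care — and the reason the induction must start at $k = 1$ rather than at $k = 0$ — is that $\sigma(0) = 1$ violates the desired bound, so the argument must never appeal to $\sigma(0)$. The verification above that the indices on the right-hand side of each recurrence lie in $\{1, \dots, k-1\}$ for every $k \geq 2$ is exactly what guarantees this; the only situation in which $\sigma(0)$ would be invoked is in evaluating $\sigma(1)$ through $\sigma(2 \cdot 0 + 1)$, which is precisely why $k = 1$ is treated directly as the base case. Apart from this bookkeeping, the estimate is routine: the average of two numbers in $[-\frac{1}{3}, \frac{1}{3}]$ again lies in that interval, and halving the sum in the odd case costs nothing. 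So I expect no genuine obstacle, only the need to phrase the induction so that $\sigma(0)$ is sidestepped.
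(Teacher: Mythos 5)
Your proof is correct and follows essentially the same route as the paper, which simply invokes "an easy induction on $k$" using the recurrences $\sigma(2n)=\sigma(n)$ and $\sigma(2n+1)=-\frac{1}{2}(\sigma(n)+\sigma(n+1))$ from Theorem~\ref{thm:mahler-autocorrelation}. Your careful bookkeeping ensuring $\sigma(0)=1$ is never invoked is exactly the point that makes the induction go through, so there is nothing to add.
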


Mahler's result is not exactly what we want, but we can easily transform it. Rather than autocorrelation, we are more interested in a quantity we call ``similarity density''; it measures how similar two words of the same length are, with a simple and intuitive definition for finite words that generalizes to infinite words by way of limits.

\begin{defn}
We interpret the Kronecker delta as a function of two variables $\delta : \Sigma^2 \to \Sigma_2$ as follows.
\[
\delta(a,b) \coloneq
\begin{cases}
0, &\text{if } a \neq b; \\
1, &\text{if } a = b.
\end{cases}
\]
\end{defn}

\begin{defn} \label{defn:sd}
Let $n \in \bbN \delete \{0\}$ and $x,y \in \Sigma^n$. The 
{\it similarity density} of $x$ and $y$ is
\[
\SD(x,y) \coloneq \frac{1}{n}\sum_{i=0}^{n-1} \delta(x[i],y[i]).
\]
\end{defn}

Thus, two finite words of the same length have similarity density $1$ if and only if they are equal.

\begin{defn} \label{defn:lusd}
Let ${\bf x}, {\bf y} \in \Sigma^\omega$. The {\it lower} and 
{\it upper similarity densities} of ${\bf x}$ and ${\bf y}$ are, respectively,
\begin{align*}
\LSD({\bf x}, {\bf y}) &\coloneq \liminf_{n\to\infty} \ \SD({\bf x}[0\dotdot n-1],{\bf y}[0\dotdot n-1]), \\
\USD({\bf x}, {\bf y}) &\coloneq \limsup_{n\to\infty} \ \SD({\bf x}[0\dotdot n-1],{\bf y}[0\dotdot n-1]).
\end{align*}
\end{defn}

\begin{remark} \label{rem:density-generalization}
Our notion of similarity density is not a new idea.
(Similar ideas can be found, e.g., in
\cite{Ochem&Rampersad&Shallit:2008,Grant&Shallit&Stoll:2009}.)
It is inspired by the well-studied number-theoretic notion of {\it asymptotic} or {\it natural density} of subsets of natural numbers.
The {\it lower} and {\it upper asymptotic densities} of $A \subseteq \bbN$ are, respectively,
\begin{align*}
\underline{d}(A) &\coloneq \liminf_{n\to\infty} \frac{1}{n}\abs{A \intersect \{0,\dots,n-1\}}, \\
\overline{d}(A) &\coloneq \limsup_{n\to\infty} \frac{1}{n}\abs{A \intersect \{0,\dots,n-1\}}.
\end{align*}
Similarity density generalizes asymptotic density in the following way.
For $A \subseteq \bbN$, let $\chi_A \in \Sigma_2^\omega$ denote the characteristic sequence of $A$ (\ie, $\chi_A[n] = {\tt 1}$ iff $n \in A$).
Then
\begin{align*}
\underline{d}(A) &= \LSD(\chi_A, {\tt 1}^\omega), \\
\overline{d}(A) &= \USD(\chi_A, {\tt 1}^\omega).
\end{align*}
\end{remark}

Mahler's result can now be restated as follows.

\begin{theorem} \label{thm:mahler-lusd}
For all $k \in \bbN \delete \{0\}$, $\frac{1}{3} \leq \LSD({\bf t}, {\bf t}[k\dotdot\infty]) = \USD({\bf t}, {\bf t}[k\dotdot\infty]) \leq \frac{2}{3}$.
\end{theorem}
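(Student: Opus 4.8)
The plan is to derive Theorem~\ref{thm:mahler-lusd} directly from Theorem~\ref{thm:mahler-autocorrelation} and Corollary~\ref{cor:mahler-autocorrelation} by translating between the two measures. The bridge is the elementary observation that, for bits $a, b \in \Sigma_2$, we have $\delta(a,b) = \frac{1 + (-1)^{a+b}}{2}$. Summing this over the first $n$ positions of ${\bf t}$ and ${\bf t}[k\dotdot\infty]$ gives
\[
\SD({\bf t}[0\dotdot n-1], {\bf t}[k\dotdot k+n-1]) = \frac{1}{n} \sum_{i=0}^{n-1} \frac{1 + (-1)^{{\bf t}[i] + {\bf t}[i+k]}}{2} = \frac{1}{2} + \frac{1}{2} \cdot \frac{1}{n} \sum_{i=0}^{n-1} (-1)^{{\bf t}[i] + {\bf t}[i+k]}.
\]
So the similarity density of the two words, as a function of $n$, equals $\frac{1}{2} + \frac{1}{2}$ times the partial average appearing in the definition of $\sigma(k)$.

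Next I would take limits. Theorem~\ref{thm:mahler-autocorrelation} asserts that $\sigma(k) = \lim_{n\to\infty} \frac{1}{n}\sum_{i=0}^{n-1} (-1)^{{\bf t}[i]+{\bf t}[i+k]}$ exists (not merely $\liminf$ or $\limsup$). Since the affine map $u \mapsto \frac{1}{2} + \frac{u}{2}$ is continuous, the sequence $\SD({\bf t}[0\dotdot n-1], {\bf t}[k\dotdot k+n-1])$ also converges, and its limit is $\frac{1}{2} + \frac{1}{2}\sigma(k)$. In particular $\LSD$ and $\USD$ coincide and both equal $\frac{1}{2} + \frac{1}{2}\sigma(k)$, which gives the claimed equality $\LSD({\bf t}, {\bf t}[k\dotdot\infty]) = \USD({\bf t}, {\bf t}[k\dotdot\infty])$.

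Finally I would plug in the bound from Corollary~\ref{cor:mahler-autocorrelation}: for $k \in \bbN \delete \{0\}$ we have $-\frac{1}{3} \leq \sigma(k) \leq \frac{1}{3}$, hence
\[
\frac{1}{3} = \frac{1}{2} + \frac{1}{2}\cdot\left(-\frac{1}{3}\right) \leq \frac{1}{2} + \frac{1}{2}\sigma(k) \leq \frac{1}{2} + \frac{1}{2}\cdot\frac{1}{3} = \frac{2}{3},
\]
which is exactly the desired inequality. The only mildly delicate point is being careful about the indexing convention: ${\bf y} = {\bf t}[k\dotdot\infty]$ has ${\bf y}[i] = {\bf t}[i+k]$, so that $\SD({\bf t}[0\dotdot n-1], {\bf y}[0\dotdot n-1]) = \frac{1}{n}\sum_{i=0}^{n-1}\delta({\bf t}[i], {\bf t}[i+k])$, matching the sum in Mahler's theorem. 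There is no real obstacle here; the content is entirely in Theorem~\ref{thm:mahler-autocorrelation} and Corollary~\ref{cor:mahler-autocorrelation}, which we are permitted to assume, and this result is just a restatement obtained by a linear change of variables.
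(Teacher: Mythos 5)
Your proposal is correct and follows essentially the same route as the paper: both use the identity $(-1)^{{\bf t}[i]+{\bf t}[i+k]} = 2\delta({\bf t}[i],{\bf t}[i+k]) - 1$ (equivalently $\delta(a,b) = \frac{1+(-1)^{a+b}}{2}$) to rewrite the similarity density of prefixes as $\frac{1}{2} + \frac{1}{2}$ times Mahler's partial average, then invoke Theorem~\ref{thm:mahler-autocorrelation} for existence of the limit and Corollary~\ref{cor:mahler-autocorrelation} for the bounds, yielding $\LSD = \USD = \frac{1}{2}(\sigma(k)+1) \in [\frac{1}{3},\frac{2}{3}]$. Your extra care about the indexing convention for ${\bf t}[k\dotdot\infty]$ is implicit in the paper's argument and does not constitute a different approach.
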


\begin{proof}
Note that for all $i,k \in \bbN$, $(-1)^{{\bf t}[i]+{\bf t}[i+k]} = 2\delta({\bf t}[i], {\bf t}[i+k]) - 1$.
Hence, by Definition~\ref{defn:lusd}, Theorem~\ref{thm:mahler-autocorrelation}, and Corollary~\ref{cor:mahler-autocorrelation}, we obtain
\[
\LSD({\bf t}, {\bf t}[k\dotdot\infty]) = \USD({\bf t}, {\bf t}[k\dotdot\infty]) = \frac{1}{2}(\sigma(k)+1) \in \frac{1}{2}\left(\left[-\frac{1}{3},\frac{1}{3}\right]+1\right) = \left[\frac{1}{3},\frac{2}{3}\right]. \qedhere
\]
\end{proof}

\begin{remark} \label{rem:h}
There exist overlap-free infinite binary words ${\bf w}$ with $\LSD({\bf t}, {\bf w}) < \USD({\bf t}, {\bf w})$.
One example is the word ${\bf h} = {\tt 00100110100101100110100110010110 \cdots}$ whose $n$\superscript{th} position is the number of {\tt 0}s (modulo 2) in the binary representation of $n$.
(Note that ${\bf h}[0] = {\tt 0}$ as we take the binary representation of $0$ to be $\epsilon$.)
We prove in Proposition~\ref{prop:h} that $\LSD({\bf t}, {\bf h}) = \frac{1}{3}$ while $\USD({\bf t}, {\bf h}) = \frac{2}{3}$. See Figure~\ref{fig:SD-t-h}, where this similarity density is graphed as a function of the length of the prefix.

\begin{figure}[H]
\begin{center}
\includegraphics[width=5.0in]{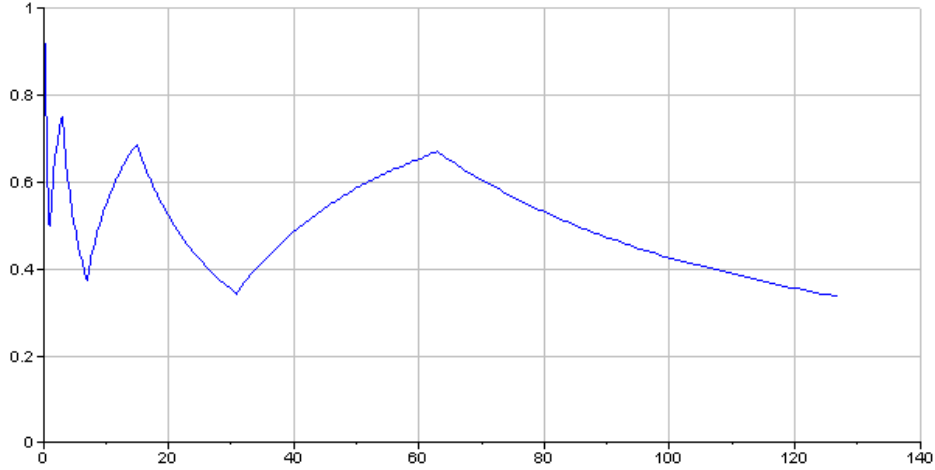}
\end{center}
\caption{Similarity density of prefixes of $\bf t$ and $\bf h$}
\protect\label{fig:SD-t-h}
\end{figure}
\end{remark}

Our main result (Theorem~\ref{thm:main}) is that the lower and upper similarity densities of $\bf t$ with {\it any} overlap-free infinite binary word other than ${\bf t}$ and $\overline{\bf t}$ are bounded below and above as in Theorem~\ref{thm:mahler-lusd}, but with the constants $\frac{1}{4}$ and $\frac{3}{4}$ instead of $\frac{1}{3}$ and $\frac{2}{3}$ respectively. However, computational evidence suggests that the tighest bounds are indeed $\frac{1}{3}$ and $\frac{2}{3}$, which, if true, would fully generalize Theorem~\ref{thm:mahler-lusd} from nontrivial shifts of ${\bf t}$ to all overlap-free infinite binary words (other than ${\bf t}$ and $\overline{\bf t}$).

The following are basic properties of similarity density that we will use later.
Their statements are all intuitive and their proofs are just basic exercises in algebra. Observation~\ref{obsv:sd-weighted-avg} states that similarity density can be computed using weighted averages.
Observation~\ref{obsv:complement-sd} and Corollary~\ref{cor:complement-lusd} explain how complementation affects similarity density.
Observation~\ref{obsv:lusd-ignore-finite} states that the similarity densities of infinite words depends only on their tails, so we can ignore arbitrarily long prefixes.
Observation~\ref{obsv:lusd-fixed-finite-interval} states that the similarity densities of infinite words can be obtained by considering similarity densities of prefixes where the length of the prefix grows by any constant instead of just by one in each iteration.

\begin{obsv} \label{obsv:sd-weighted-avg}
Let $n,m \in \bbN \delete \{0\}$, $u,v \in \Sigma^n$, and $x,y \in \Sigma^m$.
Then
\[
\SD(ux,vy) = \frac{n}{n+m}\SD(u,v) + \frac{m}{n+m}\SD(x,y).
\]
\end{obsv}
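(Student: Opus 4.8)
The plan is to prove Observation~\ref{obsv:sd-weighted-avg} directly from Definition~\ref{defn:sd} by splitting the defining sum at the boundary between the $u$-part and the $x$-part of the concatenated word. Write $w \coloneq ux$ and $z \coloneq vy$, both of length $n+m$. For $0 \leq i \leq n-1$ we have $w[i] = u[i]$ and $z[i] = v[i]$, while for $n \leq i \leq n+m-1$ we have $w[i] = x[i-n]$ and $z[i] = y[i-n]$.

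First I would apply the definition of similarity density to $\SD(ux,vy)$, obtaining $\frac{1}{n+m}\sum_{i=0}^{n+m-1}\delta(w[i],z[i])$. Next I would break the index range $\{0,\dots,n+m-1\}$ into the two disjoint blocks $\{0,\dots,n-1\}$ and $\{n,\dots,n+m-1\}$ and split the sum accordingly into $\sum_{i=0}^{n-1}\delta(u[i],v[i])$ plus $\sum_{i=n}^{n+m-1}\delta(x[i-n],y[i-n])$. Then I would re-index the second sum via the substitution $j = i-n$, turning it into $\sum_{j=0}^{m-1}\delta(x[j],y[j])$. Recognizing $\sum_{i=0}^{n-1}\delta(u[i],v[i]) = n\,\SD(u,v)$ and $\sum_{j=0}^{m-1}\delta(x[j],y[j]) = m\,\SD(x,y)$ by Definition~\ref{defn:sd}, I would substitute back to get $\SD(ux,vy) = \frac{1}{n+m}\bigl(n\,\SD(u,v) + m\,\SD(x,y)\bigr) = \frac{n}{n+m}\SD(u,v) + \frac{m}{n+m}\SD(x,y)$, as claimed. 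The hypotheses $n,m \geq 1$ guarantee the denominators in the three invocations of $\SD$ are nonzero, so every expression is well-defined.

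There is essentially no obstacle here: the only point requiring a small amount of care is the re-indexing of the tail sum and the bookkeeping that the concatenation $ux$ genuinely has $u$ in positions $0$ through $n-1$ and $x$ (shifted) in positions $n$ through $n+m-1$ — which is immediate from the zero-indexing convention adopted in the Notation section. Thus the proof is a one-line computation once the sum is split, and I would present it as such rather than belaboring the index shift.
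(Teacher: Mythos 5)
Your proof is correct and matches the paper's own argument: both simply apply Definition~\ref{defn:sd}, split the sum over $\{0,\dots,n+m-1\}$ at index $n$, re-index the tail, and recognize the two pieces as $n\,\SD(u,v)$ and $m\,\SD(x,y)$. Nothing further is needed.
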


\begin{proof}
\begin{align*}
\SD(ux,vy)
&= \frac{1}{n+m}\sum_{i=0}^{n+m-1} \delta((ux)[i],(vy)[i]) \\
&= \frac{1}{n+m}\left(\sum_{i=0}^{n-1} \delta(u[i],v[i]) + \sum_{i=0}^{m-1} \delta(x[i],y[i])\right) \\
&= \frac{n}{n+m}\cdot \frac{1}{n}\sum_{i=0}^{n-1} \delta(u[i],v[i]) + \frac{m}{n+m}\cdot\frac{1}{m}\sum_{i=0}^{m-1} \delta(u[i],v[i]) \\
&= \frac{n}{n+m}\SD(u,v) + \frac{m}{n+m}\SD(x,y). \qedhere
\end{align*}
\end{proof}

\begin{obsv} \label{obsv:complement-sd}
For all $n \in \bbN \delete \{0\}$ and $x, y \in \Sigma_2^n$,
\begin{itemize}
\item[(i)] $\SD(\overline{x},y) = 1 - \SD(x,y)$.
\item[(ii)] $\SD(\overline{x},\overline{y}) = \SD(x,y)$.
\end{itemize}
\end{obsv}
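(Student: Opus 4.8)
The plan is to reduce everything to a pointwise statement about the Kronecker delta under complementation, and then sum. The two facts I would establish first are: for any $a,b \in \Sigma_2$, we have $\delta(\overline{a},b) = 1 - \delta(a,b)$ and $\delta(\overline{a},\overline{b}) = \delta(a,b)$. The first holds because in a two-letter alphabet, complementing exactly one of the two arguments toggles whether they agree (if $a = b$ then $\overline{a} \neq b$, and if $a \neq b$ then $\overline{a} = b$, since $\Sigma_2$ has only two elements); the second holds because complementing both arguments simultaneously preserves agreement. Each of these is a trivial case check on the at most two possibilities for the pair $(a,b)$ up to the relevant symmetry.

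Next I would apply these identities position by position. For part (i), starting from Definition~\ref{defn:sd},
\[
\SD(\overline{x},y) = \frac{1}{n}\sum_{i=0}^{n-1} \delta(\overline{x}[i],y[i]) = \frac{1}{n}\sum_{i=0}^{n-1} \bigl(1 - \delta(x[i],y[i])\bigr) = 1 - \frac{1}{n}\sum_{i=0}^{n-1} \delta(x[i],y[i]) = 1 - \SD(x,y),
\]
using that $\overline{x}[i] = \overline{x[i]}$ (the definition of complementation of a word is position-wise) together with the pointwise identity and linearity of the finite sum. For part (ii), the identical computation with $\delta(\overline{x}[i],\overline{y}[i]) = \delta(x[i],y[i])$ gives $\SD(\overline{x},\overline{y}) = \SD(x,y)$ directly, with no rearrangement needed. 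Alternatively, (ii) follows from (i) applied twice: $\SD(\overline{x},\overline{y}) = 1 - \SD(x,\overline{y}) = 1 - (1 - \SD(x,y)) = \SD(x,y)$, which avoids re-deriving a second pointwise identity.

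There is no real obstacle here; the only point requiring any care is the pointwise delta identity, and even that is immediate because the alphabet is binary — this is exactly where the hypothesis $x,y \in \Sigma_2^n$ (rather than a general alphabet) is used, since $\delta(\overline{a},b) = 1 - \delta(a,b)$ fails as soon as $\abs{\Sigma} \geq 3$. Everything after that is just linearity of summation over a fixed finite index set, so I would keep the written proof to a few lines, as the paper already signals.
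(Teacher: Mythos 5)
Your proposal is correct and follows essentially the same route as the paper: the pointwise identity $\delta(\overline{a},b)=1-\delta(a,b)$ summed over positions for (i), and then (ii) obtained either directly or, as the paper does, by applying (i) twice together with the symmetry of $\SD$.
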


\begin{proof}
\mbox{}
\begin{itemize}
\item[(i)] $\SD(\overline{x},y) = \frac{1}{n}\sum_{i=0}^{n-1} \delta(\overline{x}[i],y[i]) = \frac{1}{n}\sum_{i=0}^{n-1} (1 - \delta(x[i],y[i])) = 1 - \SD(x,y)$.
\item[(ii)] By (i) and symmetry of $\SD$, we have $\SD(\overline{x},\overline{y}) = 1 - \SD(x,\overline{y}) = 1 - (1 - \SD(x,y)) = \SD(x,y)$. \qedhere
\end{itemize}
\end{proof}

\begin{cor} \label{cor:complement-lusd}
For all ${\bf x}, {\bf y} \in \Sigma_2^\omega$,
\begin{itemize}
\item[(i)] $\LSD({\bf \overline{x}},{\bf y}) = 1 - \USD({\bf x},{\bf y})$ and $\USD({\bf \overline{x}},{\bf y}) = 1 - \LSD({\bf x},{\bf y})$.
\item[(ii)] $\LSD({\bf \overline{x}},{\bf \overline{y}}) = \LSD({\bf x},{\bf y})$ and $\USD({\bf \overline{x}},{\bf \overline{y}}) = \USD({\bf x},{\bf y})$.
\end{itemize}
\end{cor}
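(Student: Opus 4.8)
The plan is to reduce everything to the finite-word identities of Observation~\ref{obsv:complement-sd}, together with the two elementary facts that complementation commutes with truncation and that $\liminf$ and $\limsup$ interchange under the map $a \mapsto 1-a$.

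First I would note that for every $n \in \bbN \delete \{0\}$ we have $\overline{\bf x}[0\dotdot n-1] = \overline{{\bf x}[0\dotdot n-1]}$, and likewise for $\bf y$, since flipping bits acts positionwise and so is unaffected by restricting to a prefix. Applying Observation~\ref{obsv:complement-sd}(i) to the length-$n$ prefixes then gives
\[
\SD\!\left(\overline{\bf x}[0\dotdot n-1],\, {\bf y}[0\dotdot n-1]\right) = 1 - \SD\!\left({\bf x}[0\dotdot n-1],\, {\bf y}[0\dotdot n-1]\right)
\]
for all such $n$, and similarly Observation~\ref{obsv:complement-sd}(ii) gives $\SD(\overline{\bf x}[0\dotdot n-1], \overline{\bf y}[0\dotdot n-1]) = \SD({\bf x}[0\dotdot n-1], {\bf y}[0\dotdot n-1])$.

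Next I would pass to $\liminf_{n\to\infty}$ and $\limsup_{n\to\infty}$ on both sides. For part (ii) this is immediate from Definition~\ref{defn:lusd}, since the two prefix-similarity-density sequences are literally equal. For part (i), writing $a_n \coloneq \SD({\bf x}[0\dotdot n-1], {\bf y}[0\dotdot n-1])$, I would invoke the standard identities $\liminf_n (1-a_n) = 1 - \limsup_n a_n$ and $\limsup_n (1-a_n) = 1 - \liminf_n a_n$ for real sequences; combined with the displayed equation and Definition~\ref{defn:lusd}, these yield exactly $\LSD(\overline{\bf x}, {\bf y}) = 1 - \USD({\bf x}, {\bf y})$ and $\USD(\overline{\bf x}, {\bf y}) = 1 - \LSD({\bf x}, {\bf y})$.

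There is essentially no obstacle here: the only point needing a word of care is the direction-reversal of $\liminf$ and $\limsup$ under $a \mapsto 1-a$ in part (i), which is a textbook property and could simply be cited. Part (ii) can also be obtained from part (i) by applying the first identity of (i) twice (once to $\bf x$ against $\overline{\bf y}$, once to $\bf y$), mirroring how Observation~\ref{obsv:complement-sd}(ii) follows from (i); I would mention this as the slicker route, though the direct argument above is equally short.
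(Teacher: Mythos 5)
Your proposal is correct and follows exactly the route the paper takes: its proof is a one-line appeal to Definition~\ref{defn:lusd}, Observation~\ref{obsv:complement-sd}, and basic limit properties, which is precisely what you spell out (prefix complementation, the finite-word identities, and the $\liminf$/$\limsup$ swap under $a \mapsto 1-a$). No gaps; your expanded write-up is just a more explicit version of the same argument.
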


\begin{proof}
Immediate by Definition~\ref{defn:lusd}, Observation~\ref{obsv:complement-sd}, and basic properties of limits.
\end{proof}

\begin{obsv} \label{obsv:lusd-ignore-finite}
Let $l \in \bbN$, $u,v \in \Sigma^l$ and ${\bf x}, {\bf y} \in \Sigma^\omega$.
Then $\LSD(u{\bf x},v{\bf y}) = \LSD({\bf x}, {\bf y})$ and $\USD(u{\bf x},v{\bf y}) = \USD({\bf x}, {\bf y})$.
\end{obsv}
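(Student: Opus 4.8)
The plan is to reduce the statement to Observation~\ref{obsv:sd-weighted-avg} together with two elementary facts about $\liminf$ and $\limsup$. We may assume $l \geq 1$, since for $l = 0$ we have $u = v = \epsilon$ and $u{\bf x} = {\bf x}$, $v{\bf y} = {\bf y}$, making the claim trivial. Fix $u, v \in \Sigma^l$ and ${\bf x}, {\bf y} \in \Sigma^\omega$, and for $n \in \bbN \delete \{0\}$ abbreviate $a_n \coloneq \SD({\bf x}[0\dotdot n-1], {\bf y}[0\dotdot n-1])$, so that by Definition~\ref{defn:lusd} we have $\LSD({\bf x},{\bf y}) = \liminf_{n\to\infty} a_n$ and $\USD({\bf x},{\bf y}) = \limsup_{n\to\infty} a_n$. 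The only place $u$ and $v$ enter is the first $l$ positions, and we isolate that contribution via the weighted-average formula.

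First I would note that for every integer $n > l$ the prefix $(u{\bf x})[0\dotdot n-1]$ equals $u$ followed by ${\bf x}[0\dotdot n-l-1]$, and similarly $(v{\bf y})[0\dotdot n-1] = v\,{\bf y}[0\dotdot n-l-1]$; applying Observation~\ref{obsv:sd-weighted-avg} to this split into a length-$l$ block and a length-$(n-l)$ block yields
\[
\SD\big((u{\bf x})[0\dotdot n-1], (v{\bf y})[0\dotdot n-1]\big) = \frac{l}{n}\SD(u,v) + \frac{n-l}{n}\, a_{n-l}.
\]
Since $\SD(u,v)$ and $a_{n-l}$ both lie in $[0,1]$, the left-hand side differs from $a_{n-l}$ by $\frac{l}{n}\,|\SD(u,v) - a_{n-l}| \leq \frac{l}{n}$, which tends to $0$ as $n \to \infty$.

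Finally I would invoke two standard facts: first, if two sequences of reals differ by a sequence tending to $0$, they have the same $\liminf$ and the same $\limsup$; second, a fixed shift of a sequence — here replacing $(a_n)_{n \geq 1}$ by $(a_{n-l})_{n > l}$ — leaves its $\liminf$ and $\limsup$ unchanged. Together these give $\liminf_{n\to\infty}\SD\big((u{\bf x})[0\dotdot n-1],(v{\bf y})[0\dotdot n-1]\big) = \liminf_{n\to\infty} a_{n-l} = \liminf_{n\to\infty} a_n = \LSD({\bf x},{\bf y})$, that is, $\LSD(u{\bf x},v{\bf y}) = \LSD({\bf x},{\bf y})$, and the verbatim same argument with $\limsup$ in place of $\liminf$ gives the $\USD$ statement. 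There is no genuine obstacle here: the whole content is the bookkeeping in the displayed identity, and the only point requiring mild care is that the coefficient $\frac{n-l}{n}$ is not exactly $1$, which is why we compare with $a_{n-l}$ through the explicit bound $\frac{l}{n}$ rather than trying to pass to a limit of a product.
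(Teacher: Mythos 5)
Your proof is correct and follows essentially the same route as the paper's: both isolate the length-$l$ prefix, show its contribution to the similarity density of prefixes is $O(l/n)$ and hence vanishes, and then use invariance of $\liminf$/$\limsup$ under index shifts and vanishing perturbations. The only cosmetic difference is that you package the sum-splitting through Observation~\ref{obsv:sd-weighted-avg}, whereas the paper carries out the same split of the defining sums directly.
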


\begin{proof}
If $l = 0$, then the proof is trivial.
If $l > 0$, then we have
\begin{align*}
\LSD(u{\bf x},v{\bf y})
&= \liminf_{n\to\infty} \frac{1}{n}\sum_{i=0}^{n-1} \delta((u{\bf x})[i],(v{\bf y})[i]) \\
&= \liminf_{n\to\infty} \frac{1}{n+l}\sum_{i=0}^{n+l-1} \delta((u{\bf x})[i],(v{\bf y})[i]) \\
&= \liminf_{n\to\infty}
\left(\vphantom{\frac{1}{n+l}\sum_{i=0}^{l-1} \delta(u[i],v[i]) + \frac{1}{n+l}\sum_{i=0}^{n-1} \delta({\bf x}[i],{\bf y}[i])} \right.
\underbrace{\frac{1}{n+l}\sum_{i=0}^{l-1} \delta(u[i],v[i])}_{\in [0,\frac{l}{n+l}] \xrightarrow{n\to\infty} 0} + \frac{1}{n+l}\sum_{i=0}^{n-1} \delta({\bf x}[i],{\bf y}[i])
\left.\vphantom{\frac{1}{n+l}\sum_{i=0}^{l-1} (1 - \delta(u[i],v[i])) + \frac{1}{n+l}\sum_{i=0}^{n-1} (1 - \delta({\bf x}[i],{\bf y}[i]))} \right) \\
&= \liminf_{n\to\infty} \left(0 + \left(\frac{1}{n}-\frac{l}{n(n+l)}\right)\sum_{i=0}^{n-1} \delta({\bf x}[i],{\bf y}[i]) \right) \\
&= \liminf_{n\to\infty}
\left(\vphantom{\frac{1}{n}\sum_{i=0}^{n-1} \delta({\bf x}[i],{\bf y}[i]) - \frac{l}{n(n+l)}\sum_{i=0}^{n-1} \delta({\bf x}[i],{\bf y}[i])}\right.
\frac{1}{n}\sum_{i=0}^{n-1} \delta({\bf x}[i],{\bf y}[i]) - \underbrace{\frac{l}{n(n+l)}\sum_{i=0}^{n-1} (1 - \delta({\bf x}[i],{\bf y}[i]))}_{\in [0,\frac{l}{n+l}] \xrightarrow{n\to\infty} 0}
\left.\vphantom{\frac{1}{n}\sum_{i=0}^{n-1} (1 - \delta({\bf x}[i],{\bf y}[i])) - \frac{l}{n(n+l)}\sum_{i=0}^{n-1} (1 - \delta({\bf x}[i],{\bf y}[i]))}\right) \\
&= \liminf_{n\to\infty} \left(\frac{1}{n}\sum_{i=0}^{n-1} (1 - \delta({\bf x}[i],{\bf y}[i])) - 0 \right) \\
&= \LSD({\bf x},{\bf y}).
\end{align*}
The proof is exactly the same for $\USD$ with $\liminf$ replaced by $\limsup$.
\end{proof}

\begin{obsv} \label{obsv:lusd-fixed-finite-interval}
Let $M \in \bbN \delete \{0\}$.
Then
\begin{align*}
\LSD({\bf x}, {\bf y}) &= \liminf_{n\to\infty} \ \SD({\bf x}[0\dotdot Mn-1],{\bf y}[0\dotdot Mn-1]), \\
\USD({\bf x}, {\bf y}) &= \limsup_{n\to\infty} \ \SD({\bf x}[0\dotdot Mn-1],{\bf y}[0\dotdot Mn-1]).
\end{align*}
\end{obsv}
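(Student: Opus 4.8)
The plan is to show that, for all integers $n \geq M$, the quantity $a_n \coloneq \SD({\bf x}[0\dotdot n-1],{\bf y}[0\dotdot n-1])$ differs from $a_{Mq}$, where $q \coloneq \floor{n/M}$, by at most $\frac{M-1}{n}$; the two identities then follow because passing from the full sequence $(a_k)_{k \geq 1}$ to the subsequence $(a_{Mq})_{q \geq 1}$ of multiples of $M$ changes neither the $\liminf$ nor the $\limsup$. Concretely, first I would fix $n \geq M$ (so that $q \geq 1$) and set $r \coloneq n - Mq \in \{0,1,\dots,M-1\}$. If $r = 0$ then $a_n = a_{Mq}$ trivially. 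If $r \geq 1$, I would apply Observation~\ref{obsv:sd-weighted-avg} with the two length-$Mq$ prefixes of ${\bf x}$ and ${\bf y}$ as $u,v$ and the length-$r$ blocks ${\bf x}[Mq\dotdot n-1]$, ${\bf y}[Mq\dotdot n-1]$ as $x,y$, obtaining
\[
a_n = \frac{Mq}{n}\, a_{Mq} + \frac{r}{n}\, \SD\big({\bf x}[Mq\dotdot n-1],{\bf y}[Mq\dotdot n-1]\big).
\]
Since similarity densities lie in $[0,1]$, this rearranges to $\abs{a_n - a_{Mq}} = \frac{r}{n}\,\abs{\SD(\cdots) - a_{Mq}} \leq \frac{r}{n} \leq \frac{M-1}{n}$.

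Second, I would observe that the sequence $\big(a_{M\floor{n/M}}\big)_{n \geq M}$ is obtained from $(a_{Mq})_{q \geq 1}$ by repeating each term $M$ times, hence has the same $\liminf$ and the same $\limsup$; that adding to it the sequence $a_n - a_{M\floor{n/M}}$, which tends to $0$ by the bound above, leaves these unchanged; and that the finitely many indices $n < M$ (where $q$ would be $0$ and $a_0$ is undefined) are irrelevant to $\liminf$ and $\limsup$. Putting these together gives $\liminf_n a_n = \liminf_q a_{Mq}$ and $\limsup_n a_n = \limsup_q a_{Mq}$, which by Definition~\ref{defn:lusd} is exactly the claimed pair of identities.

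The argument involves no real obstacle — this is one of the ``basic exercises in algebra'' promised in the text, and the heavy lifting is already done by Observation~\ref{obsv:sd-weighted-avg}. The only things to handle with a little care are the degenerate case $r = 0$, the harmless initial segment $n < M$, and the two standard facts that $\liminf$ and $\limsup$ are insensitive to term repetition and to $o(1)$ perturbations. One could instead imitate the direct $\liminf$-manipulation used in the proof of Observation~\ref{obsv:lusd-ignore-finite}, sandwiching $a_n$ between weighted averages of $a_{Mq}$ and $a_{M(q+1)}$, but the subsequence argument above is shorter.
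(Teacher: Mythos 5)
Your proposal is correct and follows essentially the same route as the paper's proof: both use Observation~\ref{obsv:sd-weighted-avg} to write $\SD({\bf x}[0\dotdot k],{\bf y}[0\dotdot k])$ at an index between consecutive multiples of $M$ as a weighted average of the value at the preceding multiple $Mq$ and a bounded block term, and then conclude that the $\liminf$ and $\limsup$ along all indices coincide with those along the subsequence of multiples of $M$. Your version merely makes the ``gets arbitrarily close'' step explicit with the quantitative bound $\abs{a_n - a_{Mq}} \leq \frac{M-1}{n}$, which is a fine (and slightly cleaner) way to phrase the same argument.
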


\begin{proof}
For any $n \in \bbN \delete \{0\}$ and $k \in \{Mn,Mn+1,\dots,M(n+1)-2\}$, by Observation~\ref{obsv:sd-weighted-avg}, we have
\begin{align*}
\SD({\bf x}[0\dotdot k],{\bf y}[0\dotdot k])
&= \frac{Mn}{k+1} \SD({\bf x}[0\dotdot Mn-1],{\bf y}[0\dotdot Mn-1]) \\
&\quad {} + \frac{k-Mn+1}{k+1} \SD({\bf x}[Mn\dotdot k],{\bf y}[Mn\dotdot k]) \\
&\in \left[\frac{Mn}{M(n+1)-1},\frac{Mn}{Mn+1}\right] \SD({\bf x}[0\dotdot Mn-1],{\bf y}[0\dotdot Mn-1]) \\
&\quad {} + \left[\frac{1}{M(n+1)-1},\frac{M-1}{Mn+1}\right] \SD({\bf x}[Mn\dotdot k],{\bf y}[Mn\dotdot k]),
\end{align*}
so since $\lim_{n\to\infty} [\frac{Mn}{M(n+1)-1},\frac{Mn}{Mn+1}] = [1,1] = \{1\}$ and $\lim_{n\to\infty} [\frac{1}{M(n+1)-1},\frac{M-1}{Mn+1}] = [0,0] = \{0\}$, all of the intermediate values $\SD({\bf x}[0\dotdot k],{\bf y}[0\dotdot k])$ for $k \in \{Mn,Mn+1,\dotdot,M(n+1)-2\}$ get arbitrarily close to $\SD({\bf x}[0\dotdot Mn-1],{\bf y}[0\dotdot Mn-1])$ as $n \to \infty$.
Hence,
\begin{align*}
\liminf_{n\to\infty}\ \SD({\bf x}[0\dotdot n-1],{\bf y}[0\dotdot n-1])
&= \liminf_{n\to\infty} \ \SD({\bf x}[0\dotdot Mn-1],{\bf y}[0\dotdot Mn-1]), \\
\limsup_{n\to\infty}\ \SD({\bf x}[0\dotdot n-1],{\bf y}[0\dotdot n-1])
&= \limsup_{n\to\infty}\ \SD({\bf x}[0\dotdot Mn-1],{\bf y}[0\dotdot Mn-1]). \qedhere
\end{align*}
\end{proof}

\section{Fife automaton for overlap-free infinite binary words}

We recall the so-called ``Fife automaton'' for overlap-free infinite binary words from \cite{Shallit:2011}.  (Note that this automaton does not appear
in the original paper of Fife \cite{Fife:1980}.)

\begin{figure}[H]
\begin{center}
\includegraphics[scale=0.75]{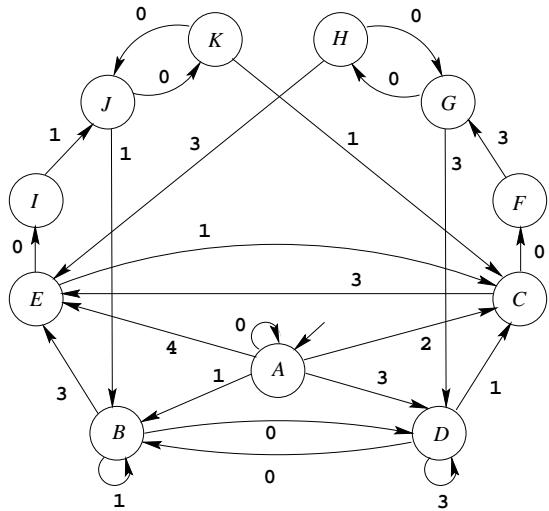}
\end{center}
\caption{Automaton encoding all overlap-free infinite binary words}
\label{nice}
\end{figure}

Here, infinite paths through the automaton encode all overlap-free infinite binary words, as follows.

\begin{defn} \label{defn:fife-bin}
First, each of the edge labels encodes a binary word, via $c : \Sigma_5 \to \Sigma_2^*$ defined by
\begin{align*}
c({\tt 0}) &\coloneq \epsilon, \\
c({\tt 1}) &\coloneq {\tt 0}, \\
c({\tt 2}) &\coloneq {\tt 00}, \\
c({\tt 3}) &\coloneq {\tt 1}, \\
c({\tt 4}) &\coloneq {\tt 11} .
\end{align*}

Then, the Fife-to-binary encoding $\FBE : (\Sigma_5^\omega \delete \Sigma_5^*{\tt 0}^\omega) \union \left(\Sigma_5^*{\tt 0}^\omega \times \Sigma_2\right) \to \Sigma_2^\omega$ is defined by
\begin{align*}
\FBE({\bf x}) &\coloneq \prod_{n=0}^\infty \mu^n(c({\bf x}[n])) &&\text{for } {\bf x} \in \Sigma_5^\omega \delete \Sigma_5^*{\tt 0}^\omega; \\
\FBE({\bf x}, a) &\coloneq \left(\prod_{n=0}^\infty \mu^n(c({\bf x}[n]))\right) \mu^\omega(a) &&\text{for } ({\bf x},a) \in \Sigma_5^*{\tt 0}^\omega \times \Sigma_2.
\end{align*}
Note that $\FBE$ is well-defined because $c$ is only erasing for the letter ${\tt 0}$ and $\mu$ is non-erasing,
so for ${\bf x} \in \Sigma_5^\omega$, the concatenation
$\prod_{n=0}^\infty \mu^n(c({\bf x}[n]))$ is finite iff ${\bf x}$ ends in ${\tt 0}^\omega$.
\end{defn}

We now recall the basic property of the automaton from \cite{Shallit:2011}.

\begin{thm} \label{thm:ovlf-words}
Let ${\bf w} \in \Sigma_2^\omega$. Then ${\bf w}$ is overlap-free iff there exists ${\bf x} \in \Sigma_5^\omega$ that encodes a valid path through the Fife automaton for overlap-free infinite binary words such that $\FBE({\bf x}) = {\bf w}$ (if ${\bf x}$ does not end in ${\tt 0}^\omega$) or $\FBE({\bf x},a) = {\bf w}$ (if ${\bf x}$ ends in ${\tt 0}^\omega$) for some $a \in S$, where $S \subseteq \Sigma_2$ depends on the eventual cycle corresponding to the suffix ${\tt 0}^\omega$ of the path encoded by ${\bf x}$: on state $A$ and between states $B$ and $D$ ($S = \Sigma_2$), between states $G$ and $H$ ($S = \{{\tt 1}\})$, or between states $J$ and $K$ ($S = \{{\tt 0}\}$).
\end{thm}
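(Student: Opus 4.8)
Since this theorem is quoted from \cite{Shallit:2011}, the shortest honest route is to cite it; I nevertheless sketch the argument one would give. The engine is the classical \emph{desubstitution lemma} for overlap-free binary words: if ${\bf w} \in \Sigma_2^\omega$ is overlap-free, then ${\bf w} = z\,\mu({\bf w}')$ for some $z \in c(\Sigma_5) = \{\epsilon,{\tt 0},{\tt 00},{\tt 1},{\tt 11}\}$ and some overlap-free ${\bf w}' \in \Sigma_2^\omega$. One proves this by locating the $\mu$-block boundaries in ${\bf w}$: since ${\bf w}$ contains neither ${\tt 000}$ nor ${\tt 111}$, after peeling off a prefix of length at most two the remainder parses uniquely into the blocks ${\tt 01} = \mu({\tt 0})$ and ${\tt 10} = \mu({\tt 1})$, and a short case analysis on that prefix leaves exactly the five possibilities for $z$; overlap-freeness of ${\bf w}'$ holds because an overlap in ${\bf w}'$ would lift under $\mu$ to an overlap in ${\bf w}$. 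In general the admissible choice of $z$ is not unique, and distinct choices impose distinct constraints on the first few letters of ${\bf w}'$; the whole point of the construction is that all of this bookkeeping is finite-state.

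I would then iterate. Applying the lemma repeatedly gives, for each $N$, a factorization ${\bf w} = \bigl(\prod_{n=0}^{N-1}\mu^n(c(x_n))\bigr)\mu^N({\bf w}^{(N)})$ with every ${\bf w}^{(N)}$ overlap-free and each $c(x_n) \in c(\Sigma_5)$. If $x_n \neq {\tt 0}$ for infinitely many $n$, then $\prod_{n}\mu^n(c(x_n))$ is already infinite, equals ${\bf w}$, and ${\bf x} = x_0 x_1 \cdots \notin \Sigma_5^*{\tt 0}^\omega$; this is the first clause of $\FBE$. Otherwise ${\bf x}$ ends in ${\tt 0}^\omega$, which (since $x_n = {\tt 0} \iff c(x_n)=\epsilon \iff {\bf w}^{(n)}=\mu({\bf w}^{(n+1)})$) says ${\bf w}^{(N_0)} = \mu^k({\bf w}^{(N_0+k)})$ for all $k$, for some $N_0$. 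As the first letter of $\mu(b)$ equals $b$, the first letter of ${\bf w}^{(N)}$ is one fixed letter $a$ for all $N \geq N_0$, so ${\bf w}^{(N_0)}[0\dotdot 2^k-1] = \mu^k(a)$ for every $k$, giving ${\bf w}^{(N_0)} = \mu^\omega(a)$ and ${\bf w} = \bigl(\prod_n\mu^n(c(x_n))\bigr)\mu^\omega(a) = \FBE({\bf x},a)$; this is the second clause.

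It remains to build the automaton and to show a sequence $(x_n)$ is realizable exactly when it labels a valid path. A state records the bounded amount of ``context'' --- a short forced prefix together with side conditions --- that a correct desubstitution at the current level must respect so that no overlap is ever created \emph{at any level}; the edge labelled $x_n$ is available precisely when peeling off $c(x_n)$ is compatible with the current context, and it updates the context. For ``valid path $\Rightarrow$ overlap-free'' one argues by contradiction: an overlap in $\FBE({\bf x})$ of period $p$ either lies inside a single block $\mu^n(c(x_n))$, which is impossible, or straddles a block boundary, and then desubstituting sufficiently many times pushes it down to one of finitely many short configurations that the transition rules were designed to forbid. The converse is the iterated desubstitution lemma together with the remark that the stream of contexts it produces is exactly a walk in the automaton. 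Finally, for the ${\tt 0}^\omega$-tail one reads off $S$ from the terminal cycle: the loop at state $A$ imposes no constraint on the first letter of the fixed-point tail, so $S = \Sigma_2$, while the two-state cycles between $G$ and $H$ and between $J$ and $K$ force that letter to be ${\tt 1}$ and ${\tt 0}$ respectively, giving $S = \{{\tt 1}\}$ and $S = \{{\tt 0}\}$; each is a finite check on a cycle of length at most two.

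\textbf{The main obstacle} is isolating the finitely many context classes and proving that ``no overlap ever appears at any level'' is a property with bounded memory, so that the desubstitution branching really is captured by a finite automaton with exactly these states and transitions. This is the technical heart of \cite{Fife:1980,Shallit:2011}, and it is precisely why the cleanest route here is to invoke the earlier work rather than reproduce the case analysis.
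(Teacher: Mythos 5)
The paper does not prove this theorem at all: it is recalled verbatim from \cite{Shallit:2011}, so your decision to cite that reference is exactly the route the paper takes. Your accompanying sketch (iterated desubstitution ${\bf w} = z\,\mu({\bf w}')$ with $z \in \{\epsilon,{\tt 0},{\tt 00},{\tt 1},{\tt 11}\}$, the dichotomy on whether the Fife word ends in ${\tt 0}^\omega$, and the finite-state bookkeeping) is consistent with the argument of that reference, and you correctly flag that the genuinely technical part --- constructing the automaton, verifying that ``no overlap at any level'' has bounded memory, and reading off $S$ on each terminal cycle --- is precisely what the citation supplies rather than something you have re-proved.
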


Recall ${\bf h}$ as defined in Remark~\ref{rem:h}.
Note that the definitions of ${\bf h}$ and ${\bf t}$ are very similar.
This is related to the special path that encodes ${\bf h}$ in the Fife automaton for overlap-free infinite binary words \cite{Shallit:2011}: ${\bf h} = \FBE({\tt 2}({\tt 31})^\omega)$.
We will see later in our proof of our main result why this path is special.
For now, we can use this path to compute the following result.

\begin{prop} \label{prop:h}
$\LSD({\bf h},{\bf t}) = \LSD({\bf \overline{h}},{\bf t}) = \frac{1}{3}$ and $\USD({\bf h},{\bf t}) = \USD({\bf \overline{h}},{\bf t}) = \frac{2}{3}$.
\end{prop}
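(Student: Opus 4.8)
The plan is to reduce everything to an explicit computation of $\LSD({\bf h},{\bf t})$ and $\USD({\bf h},{\bf t})$; once these are in hand, Corollary~\ref{cor:complement-lusd}(i) hands us $\LSD(\overline{\bf h},{\bf t}) = 1-\USD({\bf h},{\bf t})$ and $\USD(\overline{\bf h},{\bf t}) = 1-\LSD({\bf h},{\bf t})$, so the statements about $\overline{\bf h}$ come for free. To compute the two quantities for ${\bf h}$ I would first pin down exactly where ${\bf h}$ and ${\bf t}$ agree. Unwinding the Fife path ${\bf h} = \FBE({\tt 2}({\tt 31})^\omega)$ according to Definition~\ref{defn:fife-bin}, and using $\mu^k({\tt 0}) = t_k$, $\mu^k({\tt 1}) = \overline{t_k}$ and $|t_k| = 2^k$, one gets the block decomposition
\[
{\bf h} = {\tt 00}\cdot\overline{t_1}\cdot t_2\cdot\overline{t_3}\cdot t_4\cdot\overline{t_5}\cdots,
\]
so that for $k\ge 1$ the factor of ${\bf h}$ in positions $2^k,\dots,2^{k+1}-1$ is $\overline{t_k}$ if $k$ is odd and $t_k$ if $k$ is even. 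On the other hand $t_{k+1} = t_k\overline{t_k}$, so the factor of ${\bf t}$ in positions $2^k,\dots,2^{k+1}-1$ is $\overline{t_k}$ for \emph{every} $k\ge 0$, while ${\bf t}[0]={\tt 0}$. Comparing block by block (and handling the initial ${\tt 00}$ against ${\tt 01}$ by hand), and using $\SD(\overline{t_k},\overline{t_k})=1$ and $\SD(t_k,\overline{t_k})=0$ from Observation~\ref{obsv:complement-sd}(i), we conclude that ${\bf h}[n]={\bf t}[n]$ exactly when $n=0$ or $2^k\le n<2^{k+1}$ for some odd $k$ --- equivalently, exactly when the binary representation of $n$ has even length. (This also falls straight out of the number-of-{\tt 0}s and number-of-{\tt 1}s descriptions of ${\bf h}$ and ${\bf t}$.)

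Next I would turn this into a closed form for prefix similarity densities. Fix $N\ge 2$ and let $K\ge 1$ be such that $2^K\le N<2^{K+1}$. The number of agreeing positions in $\{0,\dots,N-1\}$ equals $1$ (position $0$) plus $\sum_{k\text{ odd},\,k<K}2^k$ (the full even-length blocks below $2^K$) plus, if $K$ is odd, the $N-2^K$ positions of the truncated top block; summing the geometric series gives
\[
\SD({\bf h}[0\dotdot N-1],{\bf t}[0\dotdot N-1]) =
\begin{cases}
\dfrac{2^{K+1}+1}{3N}, & K\text{ even},\\[1.5ex]
1-\dfrac{2^{K+1}-1}{3N}, & K\text{ odd}.
\end{cases}
\]
On each window $2^K\le N<2^{K+1}$ the right-hand side is monotone: for $K$ even it decreases from $\frac{2}{3}+\frac{1}{3\cdot 2^K}$ at $N=2^K$ toward $\frac{1}{3}+\frac{1}{3\cdot 2^{K+1}}$ as $N$ grows, and for $K$ odd it increases from $\frac{1}{3}+\frac{1}{3\cdot 2^K}$ toward $\frac{2}{3}+\frac{1}{3\cdot 2^{K+1}}$. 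Hence the density is always strictly above $\frac{1}{3}$ and always at most $\frac{2}{3}+\frac{1}{3\cdot 2^K}$, giving $\LSD({\bf h},{\bf t})\ge\frac13$ and $\USD({\bf h},{\bf t})\le\frac23$; and along $N=2^{2m}$ the values tend to $\frac23$ while along $N=2^{2m+1}$ they tend to $\frac13$ (as $m\to\infty$), so in fact $\LSD({\bf h},{\bf t})=\frac13$ and $\USD({\bf h},{\bf t})=\frac23$, which together with the first paragraph finishes the proof.

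I do not expect a genuine obstacle here --- it is essentially a finite check plus a geometric series. The one subtlety worth flagging is that the $\SD$ of prefixes oscillates between (roughly) $\frac13$ and $\frac23$ \emph{within} each dyadic window, not merely at its endpoints, so one must argue from the closed form valid for all $N$ rather than from the subsequence $N=2^K$ alone; and the small-index edge cases (positions $0$ and $1$, and the leading ${\tt 00}$ of ${\bf h}$ versus the leading ${\tt 01}$ of ${\bf t}$) should be verified explicitly so that no off-by-one enters the count.
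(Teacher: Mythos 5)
Your proposal is correct and follows essentially the same route as the paper: both unwind the Fife encoding to see that ${\bf h}$ and ${\bf t}$ agree exactly on the dyadic blocks $[2^k,2^{k+1})$ with $k$ odd (plus position $0$), then track the prefix similarity density, which varies monotonically within each dyadic window between values tending to $\frac{1}{3}$ and $\frac{2}{3}$, and finally transfer to $\overline{\bf h}$ via Corollary~\ref{cor:complement-lusd}(i). The only difference is bookkeeping: you count agreements with an explicit geometric-series closed form valid for every prefix length (and note the digit-parity shortcut), whereas the paper assembles the same quantities from Observations~\ref{obsv:sd-weighted-avg} and~\ref{obsv:lusd-ignore-finite}; both are sound.
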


\begin{proof}
Note that
\begin{align*}
{\bf h} &= \FBE({\tt 2}({\tt 31})^\omega)
= \mu^0(p({\tt 2})) \prod_{n=0}^\infty \left(\mu^{2n+1}(p({\tt 3})) \mu^{2n+2}(p({\tt 1}))\right) \\
&= \mu^0({\tt 00}) \prod_{n=0}^\infty \left(\mu^{2n+1}({\tt 1}) \mu^{2n+2}({\tt 0})\right) 
= {\tt 0}t_0 \prod_{n=0}^\infty \left(\overline{t_{2n+1}} t_{2n+2}\right) 
= {\tt 0}\prod_{n=0}^\infty \left(t_{2n} \overline{t_{2n+1}}\right),
\end{align*}
and since for each $n \in \bbN$, we have $\abs{t_n} = 2^n$ and $1 + \sum_{i=0}^n 2^i = 2^{n+1}$, it follows that
\[
{\bf h}[2^n \dotdot 2^{n+1} - 1] =
\begin{cases}
t_n, &\text{if } n \equiv 0 \pmod 2; \\
\overline{t_n}, &\text{if } n \equiv 1 \pmod 2. \\
\end{cases}
\]
Note that for each $n \in \bbN$, we have ${\bf t}[2^n\dotdot 2^{n+1}-1] = {t_{n+1}}[2^n\dotdot 2^{n+1}-1] = \overline{t_n}$.
Hence, for all $n \in \bbN$,
\[
{\bf h}[2^n\dotdot 2^{n+1}-1] = 
\begin{cases}
{\bf \overline{t}}[2^n\dotdot 2^{n+1}-1], &\text{if } n \equiv 0 \pmod 2; \\
{\bf t}[2^n\dotdot 2^{n+1}-1], &\text{if } n \equiv 1 \pmod 2, \\
\end{cases}
\]
whence
\[
\SD({\bf h}[2^n\dotdot 2^{n+1}-1],{\bf t}[2^n\dotdot 2^{n+1}-1]) =
\begin{cases}
0, &\text{if } n \equiv 0 \pmod 2; \\
1,  &\text{if } n \equiv 1 \pmod 2. \\
\end{cases}
\]
If we consider two of these blocks at a time, we obtain, by Observation~\ref{obsv:sd-weighted-avg}, that for all $n \in \bbN$,
\begin{align*}
\SD({\bf h}[2^n\dotdot 2^{n+2}-1],{\bf t}[2^n\dotdot 2^{n+2}-1])
&= \frac{2^n}{2^n+2^{n+1}}\SD({\bf h}[2^n\dotdot 2^{n+1}-1],{\bf t}[2^n\dotdot 2^{n+1}-1]) \\
&\quad {} + \frac{2^{n+1}}{2^n+2^{n+1}}\SD({\bf h}[2^{n+1}\dotdot 2^{n+2}-1],{\bf t}[2^{n+1}\dotdot 2^{n+2}-1]) \\
&=
\begin{cases}
\frac{2}{3}, &\text{if } n \equiv 0 \pmod 2; \\
\frac{1}{3}, &\text{if } n \equiv 1 \pmod 2 .\\
\end{cases}
\end{align*}
Iterating Observation~\ref{obsv:sd-weighted-avg} finitely many times, we obtain that for all $n \in \bbN$,
\begin{gather*}
\SD({\bf h}[1\dotdot 2^{2n}-1],{\bf t}[1\dotdot 2^{2n}-1]) = \frac{2}{3}, \\
\SD({\bf h}[2\dotdot 2^{2n+1}-1],{\bf t}[2\dotdot 2^{2n+1}-1]) = \frac{1}{3}.
\end{gather*}
Furthermore, applying Observation~\ref{obsv:sd-weighted-avg} one letter at a time, we see that for $k \in [2^{2n}-1,2^{2n+1}-1]$, $\SD({\bf h}[1\dotdot k],{\bf t}[1\dotdot k])$ monotonically decreases (from $\frac{2}{3}$), and for $k \in [2^{2n+1}-1,2^{2n+2}-1]$, $\SD({\bf h}[1\dotdot k],{\bf t}[1\dotdot k])$ monotonically increases (back to $\frac{2}{3}$).
Thus,
\[
\USD({\bf h}[1\dotdot\infty],{\bf t}[1\dotdot\infty]) = \limsup_{n\to\infty}\ \SD({\bf h}[1\dotdot n],{\bf t}[1\dotdot n]) = \frac{2}{3}.
\]
Similarly, for $k \in [2^{2n+1}-1,2^{2n+2}-1]$, $\SD({\bf h}[2\dotdot k],{\bf t}[2\dotdot k])$ monotonically increases (from $\frac{1}{3}$), and for $k \in [2^{2n+2}-1,2^{2n+3}-1]$, $\SD({\bf h}[2\dotdot k],{\bf t}[2\dotdot k])$ monotonically decreases (back to $\frac{1}{3}$), so
\[
\LSD({\bf h}[2\dotdot\infty],{\bf t}[2\dotdot\infty]) = \liminf_{n\to\infty} \ \SD({\bf h}[2\dotdot n+1],{\bf t}[2\dotdot n+1]) = \frac{1}{3}.
\]
Finally, by Observation~\ref{obsv:lusd-ignore-finite}, we conclude that $\LSD({\bf h},{\bf t}) = \LSD({\bf h}[2\dotdot\infty],{\bf t}[2\dotdot\infty]) = \frac{1}{3}$ and $\USD({\bf h},{\bf t}) = \USD({\bf h}[1\dotdot\infty],{\bf t}[1\dotdot\infty]) = \frac{2}{3}$, whence by Corollary~\ref{cor:complement-lusd}(i), we obtain $\LSD({\bf \overline{h}},{\bf t}) = 1 - \USD({\bf h},{\bf t}) = 1 - \frac{2}{3} = \frac{1}{3}$ and $\USD({\bf \overline{h}},{\bf t}) = 1 - \LSD({\bf h},{\bf t}) = 1 - \frac{1}{3} = \frac{2}{3}$.
\end{proof}

\section{Main result}

We now state and prove our main result.

\begin{thm} \label{thm:main}
For all overlap-free ${\bf w} \in \Sigma_2^\omega \delete \{{\bf t}, {\bf \overline{t}}\}$, $\frac{1}{4} \leq \LSD({\bf w},{\bf t}) \leq \USD({\bf w},{\bf t}) \leq \frac{3}{4}$.
\end{thm}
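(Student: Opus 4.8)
The plan is to reduce the claim about arbitrary overlap-free $\bf w$ to a finite computation on the Fife automaton. By Theorem~\ref{thm:ovlf-words}, every overlap-free $\bf w$ equals $\FBE({\bf x})$ (or $\FBE({\bf x},a)$) for some infinite path $\bf x$ through the automaton, so it suffices to bound $\LSD(\FBE({\bf x}),{\bf t})$ and $\USD(\FBE({\bf x}),{\bf t})$ over all valid $\bf x$. The key structural fact is that $\FBE({\bf x})$ is built by concatenating the blocks $\mu^n(c({\bf x}[n]))$, whose lengths $\abs{\mu^n(c({\bf x}[n]))} \in \{0, 2^n, 2^{n+1}\}$ grow geometrically; correspondingly, $\bf t$ itself decomposes along the same block boundaries into pieces of the form $t_n$ or $\overline{t_n}$ (as in the computation for $\bf h$ in Proposition~\ref{prop:h}, using ${\bf t}[2^n\dotdot 2^{n+1}-1] = \overline{t_n}$). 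So I would first establish a ``local'' lemma: for each $n$, the similarity density $\SD$ of the $n$-th $\FBE$-block against the corresponding stretch of $\bf t$ takes one of finitely many rational values depending only on the edge label ${\bf x}[n]$ and a bounded amount of ``carry'' information (which piece of $\bf t$, namely $t_m$ or $\overline{t_m}$, we are currently aligned against, and any leftover partial block). Since $\mu^n(c(j))$ for $j \in \Sigma_5$ is, up to a shift of alignment by one $\mu$-level, either empty, $t_n$, $t_n t_{n+1}$ (wait---$\mu^{n}(00) = t_n t_n$), $\overline{t_n}$, or $\overline{t_n}\,\overline{t_n}$, and since $\SD(t_n, t_n) = 1$, $\SD(t_n, \overline{t_n}) = 0$, and $\SD(t_n, t_{n+1}[0\dotdot 2^n-1]) = \SD(t_n, \overline{t_n}) = 0$ etc., these local similarity densities are all determined; one must also track how consuming a block of length $2^n$ or $2^{n+1}$ advances the pointer into $\bf t$'s own block structure, and this bookkeeping forms a finite-state process.

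Next I would package this into a weighted-block argument. Using Observation~\ref{obsv:sd-weighted-avg} repeatedly, $\SD({\bf w}[0\dotdot N-1],{\bf t}[0\dotdot N-1])$ for $N$ a block boundary is a convex combination, with geometrically-weighted coefficients $\approx 2^n/2^{n+1}$, of the local block similarity densities computed in the first step. Because the weights are dominated by the last block (the tail weight is a constant fraction), and because (by Observations~\ref{obsv:lusd-ignore-finite} and~\ref{obsv:lusd-fixed-finite-interval}) we may ignore finite prefixes and restrict to a sparse subsequence of prefix lengths, the values $\LSD$ and $\USD$ are controlled by the long-run behaviour of a finite-state automaton: states are (automaton state of $\bf x$, alignment/carry data), transitions are labelled by the available edges $j \in \Sigma_5$, and each transition carries a pair $(\text{weight ratio}, \text{local }\SD)$. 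Then $\LSD$ is the infimum, over all infinite runs, of the $\liminf$ of the running weighted average, and $\USD$ the corresponding supremum of the $\limsup$. Standard results on such weighted averages over a finite graph say these extremal values are attained (in the limit) by simple cycles, so it reduces to: examine every simple cycle $C$ in this finite product automaton, compute the constant similarity density that an eventually-$C$-periodic path forces, and check it lies in $[\frac14,\frac34]$ --- with the sole exceptions being the cycles that produce exactly $\bf t$ (similarity density $1$) and $\overline{\bf t}$ (similarity density $0$), which Theorem~\ref{thm:ovlf-words}'s correspondence pins down, so those are excluded by the hypothesis ${\bf w}\notin\{{\bf t},\overline{\bf t}\}$.

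The main obstacle, I expect, is making the ``carry/alignment'' state genuinely finite and getting the cycle analysis exactly right at the boundary value $\tfrac14$. The issue is that a cycle in the $\bf x$-automaton of length $L$ reading labels $j_1,\dots,j_L$ produces, over one period, blocks at levels $n, n+1, \dots, n+L-1$ whose total length is a fixed rational multiple of $2^n$; the induced similarity density of one full period against $\bf t$ is a fixed rational, but the alignment of $\bf t$'s block structure may also shift by a fixed amount each period, so one must verify the combined ($\bf x$-state, $\bf t$-alignment) process is periodic and compute its limiting average --- and then confirm no cycle gives a value below $\tfrac14$ or above $\tfrac34$. The word $\bf h = \FBE({\tt 2}({\tt 31})^\omega)$ shows the cycle ${\tt 31}$ already achieves the extreme-ish values $\tfrac13$ and $\tfrac23$ via oscillation, so the worst cycles are close to the claimed bound and the inequality $\ge \tfrac14$ presumably comes down to a short, slightly delicate case check over the cycles of the product automaton (with the gap between $\tfrac14$ and the conjectured $\tfrac13$ reflecting slack that a cruder argument --- e.g.\ bounding each block's contribution crudely rather than tracking exact cycle averages --- would leave behind). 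I would expect the proof in the paper to either do this finite cycle enumeration by hand on the explicit 11-state automaton, or to invoke a mechanized check.
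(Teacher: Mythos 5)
Your high-level starting point (encode $\bf w$ by a Fife path, decompose $\FBE({\bf x})$ into the blocks $\mu^n(c({\bf x}[n]))$, and control prefix similarity densities via Observation~\ref{obsv:sd-weighted-avg}) matches the paper's strategy, but the core of your argument --- a finite ``product automaton'' whose simple cycles one enumerates --- has genuine gaps, and the step that actually produces the constants $\frac14$ and $\frac34$ is missing. First, the ``alignment/carry'' state is not finite: what matters when comparing a level-$n$ block of $\FBE({\bf x})$ against $\bf t$ is the offset modulo $2^n$, and $2^n$ grows; the local similarity density at a misaligned offset is not one of finitely many label-determined values but ranges over many values, and bounding it is exactly the hard content. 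The paper supplies this as Lemma~\ref{lem:technical-one}, an induction on $n$ showing $\SD$ of $t_n$ (or $\overline{t_n}$) against any window of $t_{n+1}$ or $t_nt_n$ (or their complements) lies in $[\frac14,\frac34]$ except at offsets $0$ and $2^{n-1}$; your sketch never proves anything of this kind, and your sample values are not all right (e.g.\ $t_{n+1}[0\dotdot 2^n-1]=t_n$, so $\SD(t_n,t_{n+1}[0\dotdot 2^n-1])=1$, not $0$). Second, the reduction to simple cycles is unsound here: the weights are geometric (the last block carries a constant fraction of the total weight), so this is not a mean-payoff situation, and an eventually periodic path does not force a constant limiting density --- your own example $\bf h=\FBE({\tt 2}({\tt 31})^\omega)$ has $\LSD=\frac13\neq\frac23=\USD$, contradicting the claim that each cycle yields ``the constant similarity density''. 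Third, it is not true that the only dangerous configurations are those producing $\bf t$ and $\overline{\bf t}$: offsets of the form $2^{n-1}$ at level $n$ give local densities $0$ or $1$, and ruling them out requires choosing the decomposition level as a function of the offset (the $\gcd(i,2^n)\le 2^{n-2}$ condition in Corollaries~\ref{cor:technical-finite}--\ref{cor:technical-infinite-t}), not just excluding two cycles.

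What the paper does instead, and what your sketch would have to incorporate, is a structural case split on the path: valid paths contain at most one letter from $\{{\tt 2},{\tt 4}\}$, preceded only by ${\tt 0}$s, so apart from the exceptional families ${\tt 0}^n{\tt 2}({\tt 31})^\omega$, ${\tt 0}^n{\tt 4}({\tt 13})^\omega$ (handled exactly via Proposition~\ref{prop:h}, giving $\frac13$ and $\frac23$) and paths ending in ${\tt 0}^\omega$ (shifts of $\bf t$ or $\overline{\bf t}$, handled by Corollary~\ref{cor:technical-infinite-t}), the word after a finite prefix is a concatenation of blocks from $\{t_k,\overline{t_k}\}$ compared against $\bf t$ at a fixed misalignment whose $2$-adic valuation is at most $k-2$; then Observation~\ref{obsv:lusd-ignore-finite}, Observation~\ref{obsv:lusd-fixed-finite-interval} and Corollary~\ref{cor:technical-infinite} give the bounds. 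Without the inductive lemma and this alignment control, the cycle-enumeration plan cannot be carried out as stated.
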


Our approach to proving Theorem~\ref{thm:main} is to consider each overlap-free infinite binary word in terms of the path through the Fife automaton that encodes it.
We divide the paths into four cases.
\begin{enumerate}[label=(\arabic*)]
\item ends in ${\tt 0}^\omega$.
\item does not end in ${\tt 0}^\omega$, begins with ${\tt 0}^n{\tt 2}$ or ${\tt 0}^n{\tt 4}$ for some $n \in \bbN$, and contains exactly $n$ ${\tt 0}$s.
\item does not end in ${\tt 0}^\omega$, begins with ${\tt 0}^n{\tt 2}$ or ${\tt 0}^n{\tt 4}$ for some $n \in \bbN$, and contains more than $n$ ${\tt 0}$s.
\item does not end in ${\tt 0}^\omega$ and begins with ${\tt 0}^n{\tt 1}$ or ${\tt 0}^n{\tt 3}$ for some $n \in \bbN$.
\end{enumerate}
Upon closer examination of the Fife automaton, case (2) can be subdivided into two cases: ${\tt 0}^n{\tt 2(31)}^\omega$ and their complements under $\FBE$, ${\tt 0}^n{\tt 4(13)}^\omega$.
It turns out that we can bootstrap Proposition~\ref{prop:h} to obtain the same bounds for both of these cases.
Case (1) follows from Mahler's theorem~\ref{thm:mahler-lusd}, but it will also follow from our own generalized version of it (albeit with weaker bounds).
For cases (3) and (4), we observe that the infinite binary word corresponding to the path eventually ``lags behind'' the prefixes $t_n$ of ${\bf t}$ in the sense that each successive $n$\superscript{th} symbol in the path can only generate positions prior to $2^n$, whence we can use a technical lemma that bounds the similarity density of $t_n$ with nontrivial subwords of $t_{n+1}$ to complete the proof.

\begin{prop} \label{prop:edge}
For all $n \in \bbN$ we have
$\LSD(\FBE({\tt 0}^n{\tt 2(31)}^\omega),{\bf t}) = \frac{1}{3}$ and $\USD(\FBE({\tt 0}^n{\tt 2(31)}^\omega),{\bf t}) = \frac{2}{3}$.
\end{prop}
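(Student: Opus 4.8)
The plan is to reduce everything to the case $n = 0$, which is exactly Proposition~\ref{prop:h} (since ${\bf h} = \FBE({\tt 2(31)}^\omega)$). The reduction rests on two facts: prepending ${\tt 0}$s to a Fife path corresponds to applying the Thue--Morse morphism $\mu$ to the binary word it encodes, and $\mu$ preserves similarity density (and hence lower and upper similarity density of infinite words). Since ${\bf t}$ is a fixed point of $\mu$, comparing $\FBE({\tt 0}^n{\tt 2(31)}^\omega)$ with ${\bf t}$ is then the same as comparing ${\bf h}$ with ${\bf t}$.

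First I would establish that $\FBE({\tt 0}^n{\tt 2(31)}^\omega) = \mu^n({\bf h})$. Working directly from Definition~\ref{defn:fife-bin}, for any ${\bf x} \in \Sigma_5^\omega \delete \Sigma_5^*{\tt 0}^\omega$ (so that ${\tt 0}{\bf x}$ lies in the same clause of the domain of $\FBE$), since $c({\tt 0}) = \epsilon$ and $\mu$ distributes over (finite or infinite) concatenation,
\[
\FBE({\tt 0}{\bf x}) = \prod_{m=1}^\infty \mu^m(c({\bf x}[m-1])) = \prod_{j=0}^\infty \mu\bigl(\mu^j(c({\bf x}[j]))\bigr) = \mu\biggl(\prod_{j=0}^\infty \mu^j(c({\bf x}[j]))\biggr) = \mu(\FBE({\bf x})).
\]
Iterating gives $\FBE({\tt 0}^n{\bf x}) = \mu^n(\FBE({\bf x}))$, and specializing to ${\bf x} = {\tt 2(31)}^\omega$ (which does not end in ${\tt 0}^\omega$) yields $\FBE({\tt 0}^n{\tt 2(31)}^\omega) = \mu^n({\bf h})$.

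Next I would show that $\mu$ preserves similarity density. For $m \in \bbN \delete \{0\}$ and $x,y \in \Sigma_2^m$, one checks that for each $i$ the length-$2$ blocks $\mu(x[i])$ and $\mu(y[i])$ agree in both positions when $x[i] = y[i]$ and disagree in both positions otherwise (because $\{\mu({\tt 0}),\mu({\tt 1})\} = \{{\tt 01},{\tt 10}\}$), so $\delta(\mu(x)[2i],\mu(y)[2i]) + \delta(\mu(x)[2i+1],\mu(y)[2i+1]) = 2\,\delta(x[i],y[i])$; summing over $i$ gives $\SD(\mu(x),\mu(y)) = \SD(x,y)$, and iterating gives $\SD(\mu^n(x),\mu^n(y)) = \SD(x,y)$. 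To lift this to infinite words, I would note that $\mu^n({\bf x})[0\dotdot 2^n k - 1] = \mu^n({\bf x}[0\dotdot k-1])$, so $\SD(\mu^n({\bf x})[0\dotdot 2^n k - 1],\mu^n({\bf y})[0\dotdot 2^n k - 1]) = \SD({\bf x}[0\dotdot k-1],{\bf y}[0\dotdot k-1])$ for every $k$; applying Observation~\ref{obsv:lusd-fixed-finite-interval} with $M = 2^n$ then gives $\LSD(\mu^n({\bf x}),\mu^n({\bf y})) = \LSD({\bf x},{\bf y})$ and $\USD(\mu^n({\bf x}),\mu^n({\bf y})) = \USD({\bf x},{\bf y})$.

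Combining these: since ${\bf t} = \mu^\omega({\tt 0})$ is a fixed point of $\mu$, we have $\mu^n({\bf t}) = {\bf t}$, so $\LSD(\FBE({\tt 0}^n{\tt 2(31)}^\omega),{\bf t}) = \LSD(\mu^n({\bf h}),\mu^n({\bf t})) = \LSD({\bf h},{\bf t}) = \frac{1}{3}$ by Proposition~\ref{prop:h}, and similarly $\USD(\FBE({\tt 0}^n{\tt 2(31)}^\omega),{\bf t}) = \USD({\bf h},{\bf t}) = \frac{2}{3}$. I do not expect a serious obstacle here; the only places requiring care are getting the index shift right in the identity $\FBE({\tt 0}{\bf x}) = \mu(\FBE({\bf x}))$ and confirming that the relevant words indeed fall under the non-eventually-${\tt 0}^\omega$ clause of $\FBE$, after which everything is routine bookkeeping.
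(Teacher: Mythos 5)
Your proposal is correct, but it takes a genuinely different route from the paper. The paper's proof expands $\FBE({\tt 0}^n{\tt 2}({\tt 31})^\omega)$ explicitly into the block product $t_n\prod_{k\ge 0}\bigl(t_{n+2k}\overline{t_{n+2k+1}}\bigr)$, observes that its suffix from position $2^n$ coincides with ${\bf h}[2^n\dotdot\infty]$ or $\overline{\bf h}[2^n\dotdot\infty]$ according to the parity of $n$, and then concludes via Observation~\ref{obsv:lusd-ignore-finite} together with \emph{both} halves of Proposition~\ref{prop:h} (the values for $\overline{\bf h}$ are needed when $n$ is odd). You instead isolate two small facts not stated in the paper: the commutation identity $\FBE({\tt 0}{\bf x}) = \mu(\FBE({\bf x}))$ for ${\bf x}$ not ending in ${\tt 0}^\omega$, and the invariance $\SD(\mu(x),\mu(y)) = \SD(x,y)$ (hence $\LSD$ and $\USD$ invariance via Observation~\ref{obsv:lusd-fixed-finite-interval} with $M = 2^n$), the latter holding because $\mu({\tt 0})$ and $\mu({\tt 1})$ disagree in both positions; combined with $\mu^n({\bf t}) = {\bf t}$ this reduces everything to the $n=0$ case, which is exactly Proposition~\ref{prop:h} for ${\bf h}$ alone. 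Your identity $\FBE({\tt 0}^n{\tt 2}({\tt 31})^\omega) = \mu^n({\bf h})$ is consistent with the paper's explicit expansion, and all the steps (including pulling $\mu$ through the infinite product, which is legitimate since $\mu$ is non-erasing) are sound. What your approach buys is the elimination of the parity case split and of the $\overline{\bf h}$ bookkeeping, plus a reusable morphism-invariance lemma; what the paper's approach buys is that it stays entirely within the observations already proved and makes the block structure of these words explicit, which the surrounding text leans on when explaining why this family of paths is special.
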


\begin{proof}
Note that
\begin{align*}
\FBE({\tt 0}^n{\tt 2}({\tt 31})^\omega)
&= \prod_{k=0}^{n-1} \left(\mu^k(p({\tt 0}))\right)\mu^n(p({\tt 2})) \prod_{k=0}^\infty \left(\mu^{n+2k+1}(p({\tt 3})) \mu^{n+2k+2}(p({\tt 1}))\right) \\
&= \prod_{k=0}^{n-1} \left(\mu^k(\epsilon)\right)\mu^n({\tt 00}) \prod_{k=0}^\infty \left(\mu^{n+2k+1}({\tt 1}) \mu^{n+2k+2}({\tt 0})\right) \\
&= t_n t_n \prod_{k=0}^\infty \left(\overline{t_{n+2k+1}} t_{n+2k+2}\right) \\
&= t_n \prod_{k=0}^\infty \left(t_{n+2k}\overline{t_{n+2k+1}}\right).
\end{align*}
From the proof of Proposition~\ref{prop:h}, we see that
\begin{align*}
\FBE({\tt 0}^n{\tt 2}({\tt 31})^\omega)[2^n\dotdot\infty] =
\begin{cases}
{\bf h}[2^n\dotdot\infty], &\text{if } n \equiv 0 \pmod 2; \\
{\bf \overline{h}}[2^n\dotdot\infty], &\text{if } n \equiv 1 \pmod 2 .
\end{cases}
\end{align*}
Hence, by Observation~\ref{obsv:lusd-ignore-finite} and Proposition~\ref{prop:h}, we have
\begin{align*}
(\LSD,\USD)(\FBE({\tt 0}^n{\tt 2}({\tt 31})^\omega),{\bf t})
&= (\LSD,\USD)(\FBE({\tt 0}^n{\tt 2}({\tt 31})^\omega)[2^n\dotdot\infty],{\bf t}[2^n\dotdot\infty]) \\
&= \begin{cases}
(\LSD,\USD)({\bf h}[2^n\dotdot\infty],{\bf t}[2^n\dotdot\infty]), &\text{if } n \equiv 0 \pmod 2; \\
(\LSD,\USD)({\bf \overline{h}}[2^n\dotdot\infty],{\bf t}[2^n\dotdot\infty]), &\text{if } n \equiv 1 \pmod 2,
\end{cases} \\
&= \begin{cases}
(\LSD,\USD)({\bf h},{\bf t}), &\text{if } n \equiv 0 \pmod 2; \\
(\LSD,\USD)({\bf \overline{h}},{\bf t}), &\text{if } n \equiv 1 \pmod 2,
\end{cases} \\
&= \left(\frac{1}{3},\frac{2}{3}\right). \qedhere
\end{align*}
\end{proof}

\begin{lem} \label{lem:technical-one}
For all $n \in \bbN$ and $i \in [1,2^n-1]$,
\begin{align*}
\text{(a)} &&
\SD(t_n, t_{n+1}[i \dotdot 2^n + i - 1]) &\in
\begin{cases}
\{\frac{1}{2}\},  &\text{if } i = 2^{n-1} ; \\
[\frac{1}{4},\frac{3}{4}], &\text{otherwise} .
\end{cases} \\
\text{$\overline{(}$a)} &&
\SD(\overline{t_n}, t_{n+1}[i \dotdot 2^n + i - 1]) &\in
\begin{cases}
\{\frac{1}{2}\}, &\text{if } i = 2^{n-1}; \\
[\frac{1}{4},\frac{3}{4}], &\text{otherwise}.
\end{cases} \\
\text{(a$\overline{)}$} &&
\SD(t_n, \overline{t_{n+1}}[i \dotdot 2^n + i - 1]) &\in
\begin{cases}
\{\frac{1}{2}\}, &\text{if } i = 2^{n-1} ; \\
[\frac{1}{4},\frac{3}{4}], &\text{otherwise}.
\end{cases} \\
\text{$\overline{(}$a$\overline{)}$} &&
\SD(\overline{t_n}, \overline{t_{n+1}}[i \dotdot 2^n + i - 1]) &\in
\begin{cases}
\{\frac{1}{2}\}, &\text{if } i = 2^{n-1} ; \\
[\frac{1}{4},\frac{3}{4}], &\text{otherwise}.
\end{cases} \\
\text{(b)} &&
\SD(t_n, t_n^2[i \dotdot 2^n + i - 1]) &\in
\begin{cases}
\{0\}, &\text{if } i = 2^{n-1}; \\
[\frac{1}{4},\frac{3}{4}], &\text{otherwise}.
\end{cases} \\
\text{$\overline{(}$b)} &&
\SD(\overline{t_n}, t_n^2[i \dotdot 2^n + i - 1]) &\in
\begin{cases}
\{1\}, &\text{if } i = 2^{n-1}; \\
[\frac{1}{4},\frac{3}{4}], &\text{otherwise}.
\end{cases} \\
\text{(b$\overline{)}$} &&
\SD(t_n, \overline{t_n}^2[i \dotdot 2^n + i - 1]) &\in
\begin{cases}
\{1\}, &\text{if } i = 2^{n-1}; \\
[\frac{1}{4},\frac{3}{4}], &\text{otherwise}.
\end{cases} \\
\text{$\overline{(}$b$\overline{)}$} &&
\SD(\overline{t_n}, \overline{t_n}^2[i \dotdot 2^n + i - 1]) &\in
\begin{cases}
\{0\}, &\text{if } i = 2^{n-1}; \\
[\frac{1}{4},\frac{3}{4}], &\text{otherwise} .
\end{cases}
\end{align*}
\end{lem}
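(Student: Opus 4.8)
The plan is to reduce all eight parts to two autocorrelation estimates and then prove those by induction on $n$. The first reduction handles the complemented variants. Complementation commutes with taking factors, so $\overline{t_{n+1}}[i\dotdot 2^n+i-1]=\overline{t_{n+1}[i\dotdot 2^n+i-1]}$ and $\overline{t_n}^2[i\dotdot 2^n+i-1]=\overline{t_n^2[i\dotdot 2^n+i-1]}$; hence by Observation~\ref{obsv:complement-sd} each of the three complemented variants of (a) follows from (a) by applying $r\mapsto 1-r$, and likewise for (b). Since $\{\tfrac12\}$ and $[\tfrac14,\tfrac34]$ are fixed by $r\mapsto 1-r$ while $\{0\}$ and $\{1\}$ are interchanged, the exceptional values match up, so it suffices to prove (a) and (b).

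Next I would pass to $\pm 1$ correlations. Set $a_j\coloneq(-1)^{{\bf t}[j]}$, so that $\delta({\bf t}[j],{\bf t}[k])=\tfrac12(1+a_ja_k)$ and $a_{2j}=a_j$, $a_{2j+1}=-a_j$ for all $j\in\bbN$. Since $t_{n+1}=t_n\overline{t_n}$ is a prefix of ${\bf t}$ having $t_n$ as a prefix, and since the $\ell$-th letter of $t_n^2[i\dotdot 2^n+i-1]$ is $t_n[(i+\ell)\bmod 2^n]$, for $i\in[1,2^n-1]$ one gets
\[
\SD(t_n,t_{n+1}[i\dotdot 2^n+i-1])=\tfrac12+\tfrac{c_n(i)}{2^{n+1}},\qquad \SD(t_n,t_n^2[i\dotdot 2^n+i-1])=\tfrac12+\tfrac{c^*_n(i)}{2^{n+1}},
\]
where $c_n(i)\coloneq\sum_{j=0}^{2^n-1}a_ja_{j+i}$ and $c^*_n(i)\coloneq\sum_{j=0}^{2^n-1}a_ja_{(j+i)\bmod 2^n}$. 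Thus (a) and (b) become the assertions: for $i\in[1,2^n-1]$, $|c_n(i)|\le 2^{n-1}$; $|c^*_n(i)|\le 2^{n-1}$ unless $i=2^{n-1}$, in which case $c^*_n(2^{n-1})=-2^n$; and $c_n(2^{n-1})=0$. Splitting each sum on the parity of $j$ and using $a_{2j}=a_j$, $a_{2j+1}=-a_j$ yields, for $d\in\{c,c^*\}$, the Mahler-type recursion $d_n(2m)=2\,d_{n-1}(m)$ and $d_n(2m+1)=-(d_{n-1}(m)+d_{n-1}(m+1))$, with $c_n(0)=c^*_n(0)=2^n$. I would also record two symmetries: $c^*_n(2^n-i)=c^*_n(i)$ (automatic for any cyclic autocorrelation), while $c_n(2^n-i)=-c_n(i)$ — reverse the summation index $j\mapsto 2^n-1-j$ and use the identity $a_{2^k-1-m}=(-1)^k a_m$, valid since the low $k$ bits of $2^k-1-m$ are the bit-complement of those of $m$. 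In particular $c_n(2^{n-1})=0$, and $c^*_n(2^{n-1})=2^{n-1}c^*_1(1)=-2^n$ by iterating the recursion.

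The heart of the argument is then to prove, by induction on $n$ (base cases $n\le 2$ checked by hand), the strengthened statement: for all $i\in[1,2^n-1]$, $|c_n(i)|\le 2^{n-1}$ and $|c^*_n(i)|\le 2^{n-1}$ (exception $c^*_n(2^{n-1})=-2^n$), \emph{together with} the sign conditions $c_n(1)\le 0$, $c^*_n(1)\le 0$, and $c^*_n(2^{n-1}-1)\ge 0$. In the inductive step the generic indices — $i=2m$ with $m\notin\{0,2^{n-2}\}$, and $i=2m+1$ with $m,m+1\notin\{0,2^{n-2},2^{n-1}\}$ — are immediate from the recursion, since then $|d_n(2m)|=2|d_{n-1}(m)|\le 2^{n-1}$ and $|d_n(2m+1)|\le|d_{n-1}(m)|+|d_{n-1}(m+1)|\le 2^{n-1}$. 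The indices producing $d_{n-1}(2^{n-2})$ either reproduce the stated exceptional value (for $c^*$) or are harmless because $c_{n-1}(2^{n-2})=0$ (for $c$). The only real work is at $i=1$, $i=2^n-1$, and, for $c^*$, $i=2^{n-1}\pm 1$, where the naive bound $|d_{n-1}(\cdot)|\le 2^{n-2}$ loses a factor of $2$: here the sign conditions are exactly what is needed, e.g.\ $c_n(1)=-(2^{n-1}+c_{n-1}(1))\in[-2^{n-1},-2^{n-2}]$ because $c_{n-1}(1)\in[-2^{n-2},0]$, which re-establishes both $|c_n(1)|\le 2^{n-1}$ and $c_n(1)\le 0$; the index $i=2^n-1$ then follows from the (anti)symmetry, and $i=2^{n-1}\pm 1$ for $c^*$ is handled identically using $c^*_{n-1}(2^{n-2})=-2^{n-1}$ and $c^*_{n-1}(2^{n-2}-1)\ge 0$.

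I expect the last step to be the main obstacle: one has to guess the correct strengthening of the induction hypothesis (the sign conditions on $d_n(1)$ and on $c^*_n(2^{n-1}-1)$) and then verify that each of these conditions is itself preserved by the recursion. Without the strengthening the induction genuinely fails on the handful of boundary indices, since the recursion loses a factor of $2$ there; once it is in place, everything else (the reduction of the eight parts, the passage to $c_n$ and $c^*_n$, the recursions, the symmetries, and the case split) is routine.
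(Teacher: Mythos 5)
Your proposal is correct, but it proves the lemma by a genuinely different route than the paper. The paper keeps all eight statements and runs a simultaneous induction on $n$: writing $t_{n+2}=t_n\overline{t_n}\,\overline{t_n}t_n$ and $t_{n+1}^2=t_n\overline{t_n}t_n\overline{t_n}$, it splits the shifted length-$2^{n+1}$ window at the block boundary into two length-$2^n$ windows and applies Observation~\ref{obsv:sd-weighted-avg}, so each value at scale $n+1$ is the average of two scale-$n$ values; the exact exceptional values at $i=2^{n-1}$ (namely $\tfrac12$, $0$, $1$) and at the aligned index $i=2^n$ combine to exactly $\tfrac14$, $\tfrac12$, or $\tfrac34$, so no strengthening of the induction hypothesis and no special treatment of boundary shifts is ever needed; the complemented cases are then dispatched by Observation~\ref{obsv:complement-sd}, just as in your first reduction. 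You instead translate (a) and (b) into the linear and cyclic $\pm1$ autocorrelations $c_n(i)$ and $c_n^*(i)$ of $t_n$, derive the Mahler-type recursions $d_n(2m)=2d_{n-1}(m)$, $d_n(2m+1)=-(d_{n-1}(m)+d_{n-1}(m+1))$ (the finite-scale version of Theorem~\ref{thm:mahler-autocorrelation}), use the symmetry $c_n^*(2^n-i)=c_n^*(i)$ and antisymmetry $c_n(2^n-i)=-c_n(i)$ (which also yields $c_n(2^{n-1})=0$ and, via the even recursion, $c_n^*(2^{n-1})=-2^n$, i.e.\ the exceptional values), and then run an induction strengthened by the sign conditions $c_n(1)\le 0$, $c_n^*(1)\le 0$, $c_n^*(2^{n-1}-1)\ge 0$ to survive the boundary shifts $i=1$, $2^{n-1}\pm 1$, $2^n-1$, starting the induction at $n=2$ so that the index $2^{n-2}-1$ lies in range; I checked the step computations (e.g.\ $c_n(1)=-(2^{n-1}+c_{n-1}(1))\in[-2^{n-1},-2^{n-2}]$ and $c_n^*(2^{n-1}-1)=2^{n-1}-c_{n-1}^*(2^{n-2}-1)\in[2^{n-2},2^{n-1}]$) and they do close. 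The trade-off: the paper's eight-case mutual induction avoids all boundary-case analysis at the cost of carrying eight statements, while your argument needs the extra strengthening and the (anti)symmetry identities but connects the lemma directly to Mahler's recursion and delivers sharper quantitative information at the boundary shifts.
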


\begin{proof}
By induction on $n$.
\begin{itemize}
\item For $n = 0$, all eight cases are vacuously true due to $i \in \emptyset$.
\item Suppose all eight cases hold for some $n \in \bbN$.
For $i \in [1,2^{n+1}-1]$, using Observation~\ref{obsv:sd-weighted-avg} followed by the induction hypothesis, we calculate
\begin{align*}
&\SD(t_{n+1}, t_{n+2}[i \dotdot 2^{n+1} + i - 1]) \\
&= \SD(t_n\overline{t_n}, (t_n\overline{t_n}\overline{t_n}t_n)[i \dotdot 2^{n+1} + i - 1]) \\
&=
\begin{cases}
\SD(t_n\overline{t_n}, (t_n\overline{t_n}\overline{t_n})[i \dotdot 2^{n+1} + i - 1]), &\text{if } i \in [1,2^n-1]; \\
\SD(t_n\overline{t_n}, \overline{t_n}\overline{t_n}), &\text{if } i = 2^n; \\
\SD(t_n\overline{t_n}, (\overline{t_n}\overline{t_n}t_n)[i - 2^n \dotdot 2^n + i - 1]), &\text{if } i \in [2^n+1,2^{n+1}-1], \\
\end{cases} \\
&=
\begin{cases}
\frac{2^n}{2^{n+1}}\SD(t_n, (t_n\overline{t_n})[i \dotdot 2^n + i - 1]) + \frac{2^n}{2^{n+1}}\SD(\overline{t_n}, (\overline{t_n}\overline{t_n})[i \dotdot 2^n + i - 1]), &\text{if } i \in [1,2^n-1]; \\
\frac{2^n}{2^{n+1}}\SD(t_n, \overline{t_n}) + \frac{2^n}{2^{n+1}}\SD(\overline{t_n}, \overline{t_n}), &\text{if } i = 2^n; \\
\frac{2^n}{2^{n+1}}\SD(t_n, (\overline{t_n}\overline{t_n})[i - 2^n \dotdot i - 1]) + \frac{2^n}{2^{n+1}}\SD(\overline{t_n}, (\overline{t_n}t_n)[i - 2^n \dotdot i - 1]), &\text{if } i \in [2^n+1,2^{n+1}-1],
\end{cases} \\
&\in
\begin{cases}
\frac{1}{2}\{\frac{1}{2}\} + \frac{1}{2}\{0\}, &\text{if } i = 2^{n-1}; \text{ }(\text{by }(\text{a}), \overline{(}\text{b}\overline{)}) \\
\frac{1}{2}[\frac{1}{4},\frac{3}{4}] + \frac{1}{2}[\frac{1}{4},\frac{3}{4}], &\text{if } i \in [1,2^n-1] \delete \{2^{n-1}\}; \text{ }(\text{by }(\text{a}), \overline{(}\text{b}\overline{)}) \\
\frac{1}{2}\{0\} + \frac{1}{2}\{1\}, &\text{if } i = 2^n; \\
\frac{1}{2}\{1\} + \frac{1}{2}\{\frac{1}{2}\}, &\text{if } i = 2^n + 2^{n-1}; \text{ }(\text{by }(\text{b}\overline{)}, \overline{(}\text{a}\overline{)}) \\
\frac{1}{2}[\frac{1}{4},\frac{3}{4}] + \frac{1}{2}[\frac{1}{4},\frac{3}{4}], &\text{if } i \in [2^n+1,2^{n+1}-1] \delete \{2^n + 2^{n-1}\}, \text{ }(\text{by }(\text{b}\overline{)}, \overline{(}\text{a}\overline{)})
\end{cases} \\
&=
\begin{cases}
\{\frac{1}{4}\}, &\text{if } i = 2^{n-1}; \\
[\frac{1}{4},\frac{3}{4}], &\text{if } i \in [1,2^n-1] \delete \{2^{n-1}\}; \\
\{\frac{1}{2}\}, &\text{if } i = 2^n; \\
\{\frac{3}{4}\}, &\text{if } i = 2^n + 2^{n-1}; \\
[\frac{1}{4},\frac{3}{4}], &\text{if } i \in [2^n+1,2^{n+1}-1] \delete \{2^n + 2^{n-1}\},
\end{cases}
\end{align*}
\begin{align*}
&\SD(t_{n+1}, t_{n+1}^2[i \dotdot 2^{n+1} + i - 1]) \\
&= \SD(t_n\overline{t_n}, (t_n\overline{t_n}t_n\overline{t_n})[i \dotdot 2^{n+1} + i - 1]) \\
&=
\begin{cases}
\SD(t_n\overline{t_n}, (t_n\overline{t_n}t_n)[i \dotdot 2^{n+1} + i - 1]), &\text{if } i \in [1,2^n-1]; \\
\SD(t_n\overline{t_n}, \overline{t_n}t_n), &\text{if } i = 2^n; \\
\SD(t_n\overline{t_n}, (\overline{t_n}t_n\overline{t_n})[i - 2^n \dotdot 2^n + i - 1]), &\text{if } i \in [2^n+1,2^{n+1}-1], \\
\end{cases} \\
&=
\begin{cases}
\frac{2^n}{2^{n+1}}\SD(t_n, (t_n\overline{t_n})[i \dotdot 2^n + i - 1]) + \frac{2^n}{2^{n+1}}\SD(\overline{t_n}, (\overline{t_n}t_n)[i \dotdot 2^n + i - 1]), &\text{if } i \in [1,2^n-1]; \\
\frac{2^n}{2^{n+1}}\SD(t_n, \overline{t_n}) + \frac{2^n}{2^{n+1}}\SD(\overline{t_n}, t_n), &\text{if } i = 2^n; \\
\frac{2^n}{2^{n+1}}\SD(t_n, (\overline{t_n}t_n)[i - 2^n \dotdot i - 1]) + \frac{2^n}{2^{n+1}}\SD(\overline{t_n}, (t_n\overline{t_n})[i - 2^n \dotdot i - 1]), &\text{if } i \in [2^n+1,2^{n+1}-1],
\end{cases} \\
&\in
\begin{cases}
\frac{1}{2}\{\frac{1}{2}\} + \frac{1}{2}\{\frac{1}{2}\}, &\text{if } i = 2^{n-1}; \text{ }(\text{by }(\text{a}), \overline{(}\text{a}\overline{)}) \\
\frac{1}{2}[\frac{1}{4},\frac{3}{4}] + \frac{1}{2}[\frac{1}{4},\frac{3}{4}], &\text{if } i \in [1,2^n-1] \delete \{2^{n-1}\}; \text{ }(\text{by }(\text{a}), \overline{(}\text{a}\overline{)}) \\
\frac{1}{2}\{0\} + \frac{1}{2}\{0\}, &\text{if } i = 2^n; \\
\frac{1}{2}\{\frac{1}{2}\} + \frac{1}{2}\{\frac{1}{2}\}, &\text{if } i = 2^n + 2^{n-1}; \text{ }(\text{by }(\text{a}\overline{)}, \overline{(}\text{a})) \\
\frac{1}{2}[\frac{1}{4},\frac{3}{4}] + \frac{1}{2}[\frac{1}{4},\frac{3}{4}], &\text{if } i \in [2^n+1,2^{n+1}-1] \delete \{2^n + 2^{n-1}\}, \text{ }(\text{by }(\text{a}\overline{)}, \overline{(}\text{a}))
\end{cases} \\
&=
\begin{cases}
\{\frac{1}{2}\}, &\text{if } i = 2^{n-1}; \\
[\frac{1}{4},\frac{3}{4}], &\text{if } i \in [1,2^n-1] \delete \{2^{n-1}\}; \\
0, &\text{if } i = 2^n; \\
\{\frac{1}{2}\}, &\text{if } i = 2^n + 2^{n-1}; \\
[\frac{1}{4},\frac{3}{4}], &\text{if } i \in [2^n+1,2^{n+1}-1] \delete \{2^n + 2^{n-1}\},
\end{cases}
\end{align*}
hence proving (a) and (b) also hold for $n+1$. By Observation~\ref{obsv:complement-sd}, the remaining six cases also hold for $n+1$. \qedhere
\end{itemize}
\end{proof}

\begin{cor} \label{cor:technical-finite}
For all $n \in \bbN$, $i \in [0,2^n-1]$ with $\gcd(i,2^n) \leq 2^{n-2}$, and $x,y_0,y_1 \in \{t_n,\overline{t_n}\}$,
\[
\SD(x,(y_0y_1)[i \dotdot i + 2^n - 1]) \in [\tfrac{1}{4},\tfrac{3}{4}].
\]
\end{cor}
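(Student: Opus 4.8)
The plan is to read off Corollary~\ref{cor:technical-finite} directly from Lemma~\ref{lem:technical-one}, after two small observations: the arithmetic hypothesis pins down exactly which $i$ are in play, and the four possible values of the pair $(y_0,y_1)$ realize exactly the four ``shapes'' of second argument that appear in Lemma~\ref{lem:technical-one}.

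First I would unpack the condition $\gcd(i,2^n)\le 2^{n-2}$ for $i\in[0,2^n-1]$. Since $\gcd(0,2^n)=2^n>2^{n-2}$, the hypothesis forces $i\ge 1$, so $i$ lies in the range $[1,2^n-1]$ handled by Lemma~\ref{lem:technical-one}. For such $i$, writing $i=2^a m$ with $m$ odd we have $a\le n-1$ (as $i<2^n$) and $\gcd(i,2^n)=2^a$, so $\gcd(i,2^n)\le 2^{n-2}$ is equivalent to $a\le n-2$, i.e.\ to $2^{n-1}\nmid i$; and the only multiple of $2^{n-1}$ in $[1,2^n-1]$ is $2^{n-1}$ itself. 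Hence the hypothesis says exactly that $i\in[1,2^n-1]\delete\{2^{n-1}\}$. (For $n\le 1$ it is satisfied by no $i$, matching the vacuous base cases of the lemma.)

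Next I would match the eight triples $(x,y_0,y_1)\in\{t_n,\overline{t_n}\}^3$ against Lemma~\ref{lem:technical-one}. Using $t_{n+1}=t_n\overline{t_n}$ and $\overline{t_{n+1}}=\overline{t_n}\,t_n$, the word $(y_0y_1)[i\dotdot i+2^n-1]$ is the length-$2^n$ factor starting at position $i$ of $t_{n+1}$ when $(y_0,y_1)=(t_n,\overline{t_n})$, of $\overline{t_{n+1}}$ when $(y_0,y_1)=(\overline{t_n},t_n)$, of $t_n^2$ when $(y_0,y_1)=(t_n,t_n)$, and of $\overline{t_n}^2$ when $(y_0,y_1)=(\overline{t_n},\overline{t_n})$. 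Coupled with the choice $x\in\{t_n,\overline{t_n}\}$, these eight combinations are exactly the eight cases of Lemma~\ref{lem:technical-one} (the $t_{n+1}$ and $\overline{t_{n+1}}$ shapes giving the four (a)-type cases, the $t_n^2$ and $\overline{t_n}^2$ shapes the four (b)-type cases). Since $i\ne 2^{n-1}$ by the first step, every one of them falls into the ``otherwise'' branch, which yields $\SD(x,(y_0y_1)[i\dotdot i+2^n-1])\in[\tfrac14,\tfrac34]$, as required.

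There is essentially no obstacle: the corollary is pure bookkeeping on top of Lemma~\ref{lem:technical-one}, and all the genuine work --- the induction giving the $[\tfrac14,\tfrac34]$ bound and singling out $i=2^{n-1}$ as the unique exception --- already lives in that lemma. The only points needing a little care are the degenerate reading of $2^{n-2}$ for $n\le 1$ (where everything is vacuous) and confirming that the excluded case $i=0$, for which $\SD(x,y_0)\in\{0,1\}\not\subseteq[\tfrac14,\tfrac34]$, is genuinely ruled out by the hypothesis --- which it is, since $\gcd(0,2^n)=2^n$.
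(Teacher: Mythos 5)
Your proposal is correct and matches the paper's approach: the paper proves this corollary simply by noting it "follows immediately from Lemma~\ref{lem:technical-one}", and your bookkeeping (the gcd hypothesis excluding exactly $i=0$ and $i=2^{n-1}$, and the eight triples $(x,y_0,y_1)$ matching the eight cases of the lemma) is precisely the intended reading.
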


\begin{proof}
Follows immediately from Lemma~\ref{lem:technical-one}.
\end{proof}

\begin{cor} \label{cor:technical-infinite}
For all $n, i \in \bbN$ with $\gcd(i,2^n) \leq 2^{n-2}$ and ${\bf x}, {\bf y} \in \{t_n,\overline{t_n}\}^\omega$,
\[
\frac{1}{4} \leq \LSD({\bf x},{\bf y}[i \dotdot \infty]) \leq \USD({\bf x},{\bf y}[i \dotdot \infty]) \leq \frac{3}{4}.
\]
\end{cor}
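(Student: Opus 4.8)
The plan is to peel the infinite comparison down to the finite estimate of Corollary~\ref{cor:technical-finite} by exploiting the fact that both ${\bf x}$ and ${\bf y}$ are concatenations of blocks of the common length $2^n$. First I would dispose of the degenerate range: when $n \le 1$ we have $2^{n-2} < 1 \le \gcd(i,2^n)$ for every $i \in \bbN$, so no admissible $i$ exists and the statement holds vacuously; hence I may assume $n \ge 2$.

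Next I would reduce the shift to the interval $[1, 2^n-1]$. Writing ${\bf y} = y_0 y_1 y_2 \cdots$ with each $y_k \in \{t_n, \overline{t_n}\}$, and $i = a\,2^n + r$ with $0 \le r < 2^n$, we have ${\bf y}[i\dotdot\infty] = {\bf y}'[r\dotdot\infty]$, where ${\bf y}' \coloneq y_a y_{a+1} y_{a+2} \cdots \in \{t_n,\overline{t_n}\}^\omega$, and $\gcd(r,2^n) = \gcd(i,2^n) \le 2^{n-2}$; since $\gcd(0,2^n) = 2^n > 2^{n-2}$, necessarily $r \ge 1$. After renaming ${\bf y}'$ as ${\bf y}$ and $r$ as $i$, I may assume $i \in [1,2^n-1]$.

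Then I would invoke Observation~\ref{obsv:lusd-fixed-finite-interval} with $M = 2^n$, so that both densities can be computed using only prefixes of length $m\,2^n$. Writing ${\bf x} = x_0 x_1 x_2 \cdots$ with each $x_k \in \{t_n,\overline{t_n}\}$, the key point is that because $1 \le i \le 2^n - 1$, the length-$2^n$ window of ${\bf y}[i\dotdot\infty]$ lying over positions $[k\,2^n, (k+1)2^n-1]$ straddles exactly two consecutive ${\bf y}$-blocks, namely it equals $(y_k y_{k+1})[i \dotdot i + 2^n - 1]$. Iterating Observation~\ref{obsv:sd-weighted-avg} (all blocks have equal length, so the weighted average is the ordinary mean) gives
\[
\SD\bigl({\bf x}[0\dotdot m2^n-1],\, ({\bf y}[i\dotdot\infty])[0\dotdot m2^n-1]\bigr) = \frac{1}{m}\sum_{k=0}^{m-1} \SD\bigl(x_k,\, (y_k y_{k+1})[i\dotdot i+2^n-1]\bigr),
\]
and Corollary~\ref{cor:technical-finite}, applied with $x = x_k$, $y_0 = y_k$, $y_1 = y_{k+1}$ and using $\gcd(i,2^n) \le 2^{n-2}$, puts every summand in $[\tfrac14,\tfrac34]$, hence the whole average lies in $[\tfrac14,\tfrac34]$. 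Taking $\liminf$ and $\limsup$ over $m$ and applying Observation~\ref{obsv:lusd-fixed-finite-interval} once more yields the claimed bounds.

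The argument is essentially bookkeeping, so I do not anticipate a genuine obstacle; the one step requiring care is the alignment/reduction to $i \in [1,2^n-1]$ — correctly passing from an arbitrary $i \in \bbN$ to a representative modulo $2^n$ while preserving the gcd hypothesis, and verifying that a shift by such an $i$ makes each $2^n$-window of ${\bf y}[i\dotdot\infty]$ overlap precisely two blocks (which is exactly what would fail if $i$ were a multiple of $2^n$, the case the gcd condition excludes once $n \ge 2$).
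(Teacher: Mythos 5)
Your proposal is correct and follows essentially the same route as the paper: reduce the shift modulo $2^n$ (the gcd hypothesis ruling out multiples of $2^n$), observe that each aligned $2^n$-window of ${\bf y}[i\dotdot\infty]$ is a factor $(y_ky_{k+1})[i\dotdot i+2^n-1]$ of two consecutive blocks so that Corollary~\ref{cor:technical-finite} applies, average via Observation~\ref{obsv:sd-weighted-avg}, and pass to the limit with Observation~\ref{obsv:lusd-fixed-finite-interval}. The explicit vacuity check for $n\le 1$ and the explicit verification that $i\bmod 2^n\ge 1$ are minor additions the paper leaves implicit.
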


\begin{proof}
Note that for any $j \in \bbN$, $\gcd(i + j \cdot 2^n, 2^n) = \gcd(i, 2^n) \leq 2^{n-2}$.
Also for any $j \in \bbN$, since ${\bf x},{\bf y} \in \{t_n,\overline{t_n}\}^\omega$ and $\abs{t_n} = \abs{\overline{t_n}} = 2^n$, we have ${\bf x}[2^nj \dotdot 2^n(j+1)-1] \in \{t_n,\overline{t_n}\}$ and ${\bf y}[i + 2^nj \dotdot i + 2^n(j+1) - 1] = (y_0y_1)[(i \bmod 2^n) \dotdot (i \bmod 2^n) + 2^n - 1]$ for some $y_0,y_1 \in \{t_n,\overline{t_n}\}$.
Hence, for any $j \in \bbN$, by Corollary~\ref{cor:technical-finite},
\[
\SD({\bf x}[2^nj \dotdot 2^n(j+1)-1], {\bf y}[i+2^nj \dotdot i+2^n(j+1)-1]) \in \left[\frac{1}{4},\frac{3}{4}\right],
\]
whence by Observation~\ref{obsv:sd-weighted-avg},
\[
\SD({\bf x}[0 \dotdot 2^n(j+1)-1], {\bf y}[i \dotdot i+2^n(j+1)-1]) \in \left[\frac{1}{4},\frac{3}{4}\right],
\]
whence by Observation~\ref{obsv:lusd-fixed-finite-interval},
\begin{align*}
(\LSD,\USD)({\bf x},{\bf y}[i\dotdot\infty])
&= \left(\liminf_{j\to\infty},\limsup_{j\to\infty}\right) \SD({\bf x}[0 \dotdot 2^nj-1], {\bf y}[i\dotdot i+2^nj-1]) \\
&\in \left(\left[\frac{1}{4},\frac{3}{4}\right],\left[\frac{1}{4},\frac{3}{4}\right]\right). \qedhere
\end{align*}
\end{proof}

\begin{cor} \label{cor:technical-infinite-t}
For all $i \in \bbN \delete \{0\}$, $\frac{1}{4} \leq \LSD({\bf t}, {\bf t}[i\dotdot\infty]) \leq \USD({\bf t}, {\bf t}[i\dotdot\infty]) \leq \frac{3}{4}$.
\end{cor}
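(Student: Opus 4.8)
The plan is to deduce this directly from Corollary~\ref{cor:technical-infinite} by choosing an appropriate power of $2$. The one structural fact needed is that the Thue-Morse word sits inside $\{t_n, \overline{t_n}\}^\omega$ for \emph{every} $n \in \bbN$: since ${\bf t} = \mu^\omega({\tt 0})$ is the fixed point of $\mu$, we have ${\bf t} = \mu^n({\bf t})$, so for each $k \in \bbN$ the block ${\bf t}[k2^n \dotdot (k+1)2^n - 1]$ equals $\mu^n({\bf t}[k])$; this is $t_n = \mu^n({\tt 0})$ when ${\bf t}[k] = {\tt 0}$, and, because $\mu$ commutes with complementation (so $\mu^n({\tt 1}) = \mu^n(\overline{{\tt 0}}) = \overline{t_n}$), it is $\overline{t_n}$ when ${\bf t}[k] = {\tt 1}$. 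Hence ${\bf t} \in \{t_n,\overline{t_n}\}^\omega$.

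Next I would handle the arithmetic side. Fix $i \in \bbN \delete \{0\}$ and let $a$ be the exponent of the largest power of $2$ dividing $i$, so that $i = 2^a m$ with $m$ odd. Put $n \coloneq a + 2$; note $n \geq 2$, so $2^{n-2}$ is a positive integer. Since $m$ is odd, the power of $2$ occurring in $\gcd(i, 2^n)$ is $\min(a, n) = a$, and no odd prime divides $2^n$, so $\gcd(i, 2^n) = 2^a = 2^{n-2} \leq 2^{n-2}$. Thus the hypothesis $\gcd(i, 2^n) \leq 2^{n-2}$ of Corollary~\ref{cor:technical-infinite} is satisfied.

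Finally, applying Corollary~\ref{cor:technical-infinite} with this $n$ and $i$, and with ${\bf x} = {\bf y} = {\bf t}$ (legitimate by the first paragraph), yields
\[
\frac{1}{4} \leq \LSD({\bf t}, {\bf t}[i\dotdot\infty]) \leq \USD({\bf t}, {\bf t}[i\dotdot\infty]) \leq \frac{3}{4},
\]
which is exactly the claim. I do not expect any real obstacle: the result is essentially a one-line consequence of Corollary~\ref{cor:technical-infinite} once the two easy preliminary observations are in place — that ${\bf t} \in \{t_n,\overline{t_n}\}^\omega$ for every $n$, and that taking $n$ two more than the $2$-adic valuation of $i$ forces $\gcd(i,2^n) = 2^{n-2}$. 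The only point requiring a moment's care is ensuring $n \geq 2$ so that $2^{n-2}$ is a genuine integer bound, which holds automatically.
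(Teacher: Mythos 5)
Your proof is correct and follows essentially the same route as the paper: choose $n$ to be two more than the $2$-adic valuation of $i$ (so that $\gcd(i,2^n) = 2^{n-2}$), observe that ${\bf t} = \mu^n({\bf t}) \in \{t_n,\overline{t_n}\}^\omega$, and apply Corollary~\ref{cor:technical-infinite} with ${\bf x} = {\bf y} = {\bf t}$. The only difference is that you spell out the two preliminary observations in slightly more detail than the paper does.
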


\begin{proof}
Since $i > 0$, we have
$4\max_{m \in \bbN} \gcd(i,2^m) = 2^n$ for some $n \in \bbN$.
Note that $\gcd(i,2^n) = 2^{n-2}$.
Also note that ${\bf t} = \mu^n({\bf t}) \in \{t_n,\overline{t_n}\}^\omega$.
Hence, by Corollary~\ref{cor:technical-infinite},
\[
\frac{1}{4} \leq \LSD({\bf t},{\bf t}[i \dotdot \infty]) \leq \USD({\bf t},{\bf t}[i \dotdot \infty]) \leq \frac{3}{4}. \qedhere
\]
\end{proof}

We now have all the tools needed to prove Theorem~\ref{thm:main}.

\begin{proof}[Proof of Theorem~\ref{thm:main}]
Let ${\bf w} \in \Sigma_2^\omega \delete \{{\bf t},{\bf \overline{t}}\}$.
By Theorem~\ref{thm:ovlf-words}, there exists ${\bf x} \in \Sigma_5^\omega$ that encodes a valid path through the Fife automaton for overlap-free infinite binary words such that $\FBE({\bf x}) = {\bf w}$ or $\FBE({\bf x},a) = {\bf w}$ for some $a \in \Sigma_2$.
From inspection of the Fife automaton for overlap-free infinite binary words, we see that ${\bf x}$ must fall into one of the following four cases.
\begin{enumerate}[label=(\arabic*)]
\item ${\bf x}$ ends in ${\tt 0}^\omega$.
\item ${\bf x}$ does not end in ${\tt 0}^\omega$, begins with ${\tt 0}^n{\tt 2}$ or ${\tt 0}^n{\tt 4}$ for some $n \in \bbN$, and contains exactly $n$ ${\tt 0}$s.
\item ${\bf x}$ does not end in ${\tt 0}^\omega$, begins with ${\tt 0}^n{\tt 2}$ or ${\tt 0}^n{\tt 4}$ for some $n \in \bbN$, and contains more than $n$ ${\tt 0}$s.
\item ${\bf x}$ does not end in ${\tt 0}^\omega$ and begins with ${\tt 0}^n{\tt 1}$ or ${\tt 0}^n{\tt 3}$ for some $n \in \bbN$.
\end{enumerate}

\begin{enumerate}[label={\bf Case \arabic*:}]
{\setlength\itemindent{0.5cm}
\item
${\bf w}$ ends in either ${\bf t}$ or ${\bf \overline{t}}$, so since ${\bf w} \not\in \{{\bf t}, {\bf \overline{t}}\}$, it follows that ${\bf w} \in \{ z{\bf t}, z{\bf \overline{t}} \}$ for some $z \in \Sigma_2^+$.
By Observation~\ref{obsv:lusd-ignore-finite}, we have 
$$(\LSD,\USD)({\bf w},{\bf t}) \in \{ (\LSD,\USD)({\bf t},{\bf t}[\abs{z}\dotdot\infty]), (\LSD,\USD)({\bf \overline{t}},{\bf t}[\abs{z}\dotdot\infty]) \},$$
whence by Corollary~\ref{cor:technical-infinite-t} and Corollary~\ref{cor:complement-lusd}, we obtain $(\LSD,\USD)({\bf w},{\bf t}) \in (\{ [\frac{1}{4},\frac{3}{4}], [1 - \frac{3}{4}, 1 - \frac{1}{4}] \}, \{ [\frac{1}{4},\frac{3}{4}], [1 - \frac{3}{4}, 1 - \frac{1}{4}] \}) = ([\frac{1}{4},\frac{3}{4}], [\frac{1}{4},\frac{3}{4}])$, as desired.
\item
From inspection of the Fife automaton for overlap-free infinite binary words, we see that ${\bf x} \in \{ {\tt 0}^n\{{\tt 2}({\tt 31})^\omega,{\tt 4}({\tt 13})^\omega\} \st n \in \bbN \}$.
Note that $\FBE({\tt 0}^n{\tt 4}({\tt 13})^\omega) = \overline{\FBE({\tt 0}^n{\tt 2}({\tt 31})^\omega)}$.
Hence, by Proposition~\ref{prop:edge} and Corollary~\ref{cor:complement-lusd}, we obtain $(\LSD,\USD)({\bf w},{\bf t}) \in \{ (\frac{1}{3},\frac{2}{3}), (1-\frac{2}{3},1-\frac{1}{3}) \} = \{ (\frac{1}{3},\frac{2}{3}) \} \subset ([\frac{1}{4},\frac{3}{4}],[\frac{1}{4},\frac{3}{4}])$, as desired.
\item
From inspection of the Fife automaton for overlap-free infinite binary words, we see that ${\bf x} \in \{ {\tt 0}^n\{{\tt 2}({\tt 31})^{\frac{m}{2}},{\tt 4}({\tt 13})^{\frac{m}{2}}\}{\tt 0}\{{\tt 1}, {\tt 3}\}{\bf y} \st n,m \in \bbN, {\bf y} \in \{{\tt 0}, {\tt 1}, {\tt 3}\}^\omega \}$, whence
\begin{align*}
{\bf w} &\in \Sigma_2^\omega \intersect \left( \bigunion_{n,m\in\bbN} \Sigma_2^{2^{n+m+1}}\{t_{n+m+2},\overline{t_{n+m+2}}\} \prod_{k=n+m+3}^\infty \{\epsilon, t_k, \overline{t_k}\} \right) \\
&\subseteq \bigunion_{n,m\in\bbN} \Sigma_2^{2^{n+m+1}}\{t_{n+m+2},\overline{t_{n+m+2}}\}\{t_{n+m+3},\overline{t_{n+m+3}}\}^\omega,
\end{align*}
so there is a $k \in \bbN$ such that ${\bf w}[2^k \dotdot \infty] \in \{t_{k+1},\overline{t_{k+1}}\}\{t_{k+2},\overline{t_{k+2}}\}^\omega$.
By Observation~\ref{obsv:lusd-ignore-finite} and Corollary~\ref{cor:technical-infinite}, we obtain
\begin{align*}
(\LSD,\USD)({\bf t},{\bf w})
&= (\LSD,\USD)({\bf t}[2^{k+2} \dotdot \infty],{\bf w}[2^{k+2} \dotdot \infty]) \\
&= (\LSD,\USD)(\underbrace{{\bf t}[2^{k+2} \dotdot \infty]}_{\in \{t_{k+2},\overline{t_{k+2}}\}^\omega},(\underbrace{{\bf w}[3 \cdot 2^k \dotdot \infty]}_{\in \{t_{k+2},\overline{t_{k+2}}\}^\omega})[2^k \dotdot \infty]) \\
&\in \left(\left[\frac{1}{4},\frac{3}{4}\right],\left[\frac{1}{4},\frac{3}{4}\right]\right),
\end{align*}
as desired.
\item
From inspection of the Fife automaton for overlap-free infinite binary words, we see that ${\bf x} \in \{ {\tt 0}^n\{{\tt 1}, {\tt 3}\}{\tt 0}^m\{{\tt 1}, {\tt 3}\}{\bf y} \st n,m \in \bbN, {\bf y} \in \{{\tt 0}, {\tt 1}, {\tt 3}\}^\omega \}$, whence
\begin{align*}
{\bf w}
&\in \Sigma_2^\omega \intersect \left(\bigunion_{n,m\in\bbN} \{t_n,\overline{t_n}\}\{t_{n+m+1},\overline{t_{n+m+1}}\} \prod_{k=n+m+2}^\infty \{\epsilon,t_k,\overline{t_k}\} \right) \\
&\subseteq \bigunion_{n,m\in\bbN} \{t_n,\overline{t_n}\}\{t_{n+m+1},\overline{t_{n+m+1}}\} \{t_{n+m+2},\overline{t_{n+m+2}}\}^\omega,
\end{align*}
so there are $k,l \in \bbN$ such that ${\bf w} \in \{t_k,\overline{t_k}\}\{t_{k+l+1},\overline{t_{k+l+1}}\}\{t_{k+l+2},\overline{t_{k+l+2}}\}^\omega$.
By Observation~\ref{obsv:lusd-ignore-finite} and Corollary~\ref{cor:technical-infinite}, we obtain
\begin{align*}
(\LSD,\USD)({\bf t},{\bf w})
&= (\LSD,\USD)({\bf t}[2^{k+l+2} \dotdot \infty],{\bf w}[2^{k+l+2} \dotdot \infty]) \\
&= (\LSD,\USD)(\underbrace{{\bf t}[2^{k+l+2} \dotdot \infty]}_{\in \{t_{k+l+2},\overline{t_{k+l+2}}\}^\omega},(\underbrace{{\bf w}[2^k + 2^{k+l+1} \dotdot \infty]}_{\in \{t_{k+l+2},\overline{t_{k+l+2}}\}^\omega})[2^{k+l+1} - 2^k \dotdot \infty]) \\
&\in \left(\left[\frac{1}{4},\frac{3}{4}\right],\left[\frac{1}{4},\frac{3}{4}\right]\right),
\end{align*}
as desired. \qedhere
}
\end{enumerate}
\end{proof}

\section{Future work}

Using the Fife automaton for overlap-free infinite binary words, we computed similarity densities of long prefixes of all overlap-free infinite binary words (up to a certain length) with prefixes of ${\bf t}$. Inspection of the compuation results immediately suggests the following improvement to Theorem~\ref{thm:main}.

\begin{conj} \label{conj:main+}
For all overlap-free ${\bf w} \in \Sigma_2^\omega \delete \{{\bf t}, {\bf \overline{t}}\}$, we have
$\frac{1}{3} \leq \LSD({\bf w},{\bf t}) \leq \USD({\bf w},{\bf t}) \leq \frac{2}{3}$.
\end{conj}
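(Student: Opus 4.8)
The plan is to reuse verbatim the four‑case dissection of the Fife path of ${\bf w}$ from the proof of Theorem~\ref{thm:main}, sharpening only cases (3) and (4). Two preliminary reductions simplify the goal. First, by Corollary~\ref{cor:complement-lusd}(i), $\USD({\bf w},{\bf t}) = 1 - \LSD(\overline{\bf w},{\bf t})$, and $\overline{\bf w}$ is again overlap‑free and in $\Sigma_2^\omega\delete\{{\bf t},\overline{\bf t}\}$; hence it suffices to prove the one‑sided bound $\LSD({\bf w},{\bf t})\ge\frac13$ for \emph{every} overlap‑free ${\bf w}\notin\{{\bf t},\overline{\bf t}\}$. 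Second, cases (1) and (2) already give this: case (1) (shifts of ${\bf t}$) is Mahler's Theorem~\ref{thm:mahler-lusd}, and case (2) is Proposition~\ref{prop:edge}, both yielding $\LSD\in[\frac13,\frac23]$. So everything comes down to improving cases (3) and (4) from $\frac14$ to $\frac13$.

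Retracing those two cases, after deleting a finite prefix (Observation~\ref{obsv:lusd-ignore-finite}, aligned so that the tail of ${\bf t}$ begins at a multiple of $2^{k+2}$) one compares ${\bf t}$ and ${\bf w}$ as members of $\{t_{m},\overline{t_{m}}\}^\omega$ at the scale $m=k+2$: ${\bf t}$ contributes the block pattern ${\bf t}$ itself (up to a shift), ${\bf w}$ contributes a block pattern $\beta$, and the offset $r$ of ${\bf w}$'s block boundaries relative to ${\bf t}$'s satisfies $\gcd(r,2^{m})=2^{m-2}$, i.e.\ $r\in\{2^{m-2},3\cdot2^{m-2}\}$ — exactly the residue class for which Corollary~\ref{cor:technical-finite} leaves a gap of full width. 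For such $r$ the inductive step of Lemma~\ref{lem:technical-one} gives $\SD(t_m,t_m^2[r\dotdot 2^m+r-1])=\frac12$ and $\SD(t_m,t_{m+1}[r\dotdot 2^m+r-1])\in\{\frac14,\frac34\}$ \emph{exactly}. Hence each length‑$2^m$ block of the comparison contributes $\frac12$ (whenever $\beta$ does not change between the two relevant block indices) or, if it does change, $\frac14$ or $\frac34$ according to the alignment with ${\bf t}$'s pattern. Let $c_N$ be the frequency, among the first $N$ blocks, of an index where $\beta$ changes value, and $d_N$ the frequency of blocks on the extremal side ($\frac14$ if $r=2^{m-2}$, $\frac34$ if $r=3\cdot2^{m-2}$), so $0\le d_N\le c_N$. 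By Observation~\ref{obsv:sd-weighted-avg} the similarity density of the first $2^m N$ symbols equals $\frac12 \pm(\frac14 c_N - \frac12 d_N)$, and in either case one checks, using $0\le d_N\le c_N$ and Observation~\ref{obsv:lusd-fixed-finite-interval},
\[
\LSD({\bf w},{\bf t})\;\ge\;\tfrac12 - \tfrac14\,\limsup_{N\to\infty} c_N .
\]

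It remains to show $\limsup_N c_N\le\frac23$, i.e.\ that the upper density of $\{j:\beta_j\neq\beta_{j+1}\}$ is at most $\frac23$. Crucially $\beta$ is itself an infinite overlap‑free binary word: the corresponding tail of ${\bf w}$ is $\mu^m(\beta)$ (in case (4), $\mu^m$ of an iterated $\mu$‑image), and since $u\mapsto\mu(u)$ preserves \emph{and reflects} overlap‑freeness, $\beta$ inherits it from the suffix of ${\bf w}$. So the missing ingredient is the clean fact: \emph{every infinite overlap‑free binary word $z$ has $\limsup_N\frac1N\#\{j<N:z_j\neq z_{j+1}\}=\frac23$.} This follows from Restivo--Salemi desubstitution: write $z=x\,\mu(z')$ with $|x|\le 2$ and $z'$ again infinite overlap‑free; a direct count shows the change‑count of $\mu(z')$ on $[0,2M)$ is $2M$ minus the change‑count of $z'$ on $[0,M)$, whence the upper and lower change‑densities satisfy $\overline{c}(z)=1-\tfrac12\underline{c}(z')$ and $\underline{c}(z)=1-\tfrac12\overline{c}(z')$; iterating (all iterates remain infinite) forces $\overline{c}(z)=\underline{c}(z)=\tfrac23$, the unique bounded solution of $s=1-\tfrac12 s$. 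Substituting $\limsup_N c_N=\frac23$ into the display gives $\LSD({\bf w},{\bf t})\ge\frac12-\frac16=\frac13$, finishing cases (3)--(4) and hence the conjecture.

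I expect the real care to be needed in two places. The first is making the change‑density lemma airtight: the desubstitution carries a short boundary word at each stage, one must handle odd prefixes in the parity bookkeeping, argue the iteration never terminates, and justify interchanging $\limsup/\liminf$ through $\mu$ — shallow but error‑prone. The second, and more laborious, is confirming that \emph{every} Fife path in cases (3) or (4) really does reduce, at scale $2^{k+2}$, to the single situation $\gcd(r,2^{k+2})=2^{(k+2)-2}$ with block values confined to $\{\frac14,\frac12,\frac34\}$; this means re‑examining the automaton exactly as in the proof of Theorem~\ref{thm:main} (the ``lagging behind'' phenomenon) and checking no sub‑case yields a different residue class. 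Should both checks succeed, the argument would also explain the computational evidence: the value $\frac13$ is then attained precisely on the case‑(1) and case‑(2) words — shifts of ${\bf t}$ and the $\FBE({\tt 0}^n{\tt 2}({\tt 31})^\omega)$ family — and never beaten.
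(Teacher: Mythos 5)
First, a point of order: the paper does not prove this statement --- it is left as Conjecture~\ref{conj:main+}, and the authors explicitly remark that their own technique cannot yield the constant $\frac13$ because the bound $\frac14$ in Lemma~\ref{lem:technical-one}/Corollary~\ref{cor:technical-finite} is attained. So there is no proof to compare against, and your proposal has to stand on its own. Its genuinely new ingredient is the right kind of idea to beat the authors' obstruction: at a quarter-block offset the extremal per-block value $\frac14$ can occur only at junctions where the desubstituted block word $\beta$ changes letter, and your change-density claim (every infinite overlap-free binary word has density of $j$ with $z[j]\neq z[j+1]$ equal to $\frac23$) caps the frequency of such junctions, so the worst case averages to $\frac12-\frac14\cdot\frac23=\frac13$ rather than $\frac14$. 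Your reductions (complementation to a one-sided bound; Cases 1--2 via Theorem~\ref{thm:mahler-lusd} and Proposition~\ref{prop:edge}; $\beta$ overlap-free because $\mu$ reflects overlap-freeness) are sound, and the desubstitution recursion $\overline{c}(z)=1-\frac12\underline{c}(z')$, $\underline{c}(z)=1-\frac12\overline{c}(z')$ does contract to $\frac23$ if the Restivo--Salemi factorization can be iterated forever with bounded boundary words, which is exactly the structure the Fife encoding provides. I could not find a fatal error in this plan.

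That said, as written this is an outline with load-bearing steps unproved or misstated, so it cannot be accepted as a proof of the conjecture. (i) The exact values $\SD(t_m,t_m^2[i\dotdot i+2^m-1])=\frac12$ and $\SD(t_m,t_{m+1}[i\dotdot i+2^m-1])\in\{\frac14,\frac34\}$ at $i\in\{2^{m-2},3\cdot 2^{m-2}\}$ are \emph{not} part of Lemma~\ref{lem:technical-one} as stated (it only gives $[\frac14,\frac34]$ there); they do appear inside its inductive case analysis, so you must formulate and prove a strengthened lemma rather than cite the existing one. (ii) Your alignment claim is wrong at the scale the paper uses in its Case 4: for paths with $l\geq 1$ intermediate ${\tt 0}$s the offset $2^{k+l+1}-2^k$ satisfies $\gcd(2^{k+l+1}-2^k,2^{k+l+2})=2^k<2^{(k+l+2)-2}$, not equality. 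The repair is to re-block at scale $k+2$ (every $t_j$ with $j\geq k+2$ lies in $\{t_{k+2},\overline{t_{k+2}}\}^*$), discarding a leading half-block in Case 3 and in Case 4 with $l=0$; then every case-3/4 word sits at offset $2^k$ or $3\cdot 2^k$ modulo $2^{k+2}$, i.e.\ a true quarter offset, and one must also verify that the induced $\beta$-junctions line up one-to-one and consecutively with the ${\bf t}$-blocks being compared. (iii) The change-density lemma itself, including the prefix/parity bookkeeping and the justification that the desubstituted words remain infinite and overlap-free at every level, is only sketched. Until (i)--(iii) are written out in full, the statement remains a conjecture; if they are, your argument would settle it, and would do so by a route the paper explicitly could not take.
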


Note that the bounds in Conjecture~\ref{conj:main+} are tight due to Proposition~\ref{prop:h}.
Computational evidence also suggests that these bounds are also tight for many other overlap-free infinite binary words.

However, Conjecture~\ref{conj:main+} cannot be proved just by using the technique we used to prove Theorem~\ref{thm:main}.
This is because the bounds in Lemma~\ref{lem:technical-one} (and, more transparently, Corollary~\ref{cor:technical-finite}) are tight.
For example, $\SD(t_2, t_3[1 \dotdot 4]) = \SD({\tt 0110}, {\tt 1101}) = \frac{1}{4}$.
More generally,
for any $n \in \bbN$, we have $\SD(t_{n+2}, t_{n+3}[2^n \dotdot 2^{n+2} + 2^n - 1]) = \frac{1}{4}$.

On the other hand, our proof of Theorem~\ref{thm:main} never used the overlap-free property directly; we merely used it indirectly via the Fife automaton.
As such, our proof of Theorem~\ref{thm:main} works for all images of $\FBE$ provided the argument to $\FBE$ is of the form required for one of the four cases presented in the proof, regardless of whether the resulting word is overlap-free.
Namely, we have the following more general, but much more cumbersome, theorem.

\begin{thm} \label{thm:main+}
For all $
{\bf x} \in
\{{\tt 1},{\tt 2},{\tt 3},{\tt 4}\} \Sigma_5^*{\tt 0}^\omega 
\union
{\tt 0}^* \{{\tt 2}({\tt 31}), {\tt 4}({\tt 13})\}^\omega $
\begin{displaymath}
\union\ 
{\tt 0}^* \{{\tt 2}({\tt 31})^*\{\epsilon,{\tt 3}\},{\tt 4}({\tt 13})^*\{\epsilon,{\tt 1}\}\}{\tt 0}\{{\tt 1},{\tt 3}\}\{{\tt 0},{\tt 1},{\tt 3}\}^\omega 
\union
({\tt 0}^*\{{\tt 1},{\tt 3}\})^2\{{\tt 0},{\tt 1},{\tt 3}\}^\omega
\end{displaymath}
and
\[
{\bf w} \in
\begin{cases}
\{\FBE({\bf x}, {\tt 0}),\FBE({\bf x}, {\tt 1})\}, & \text{if ${\bf x}$ ends in ${\tt 0}^\omega$}; \\
\{\FBE({\bf x})\}, &\text{otherwise},
\end{cases}
\]
we have
\[
\frac{1}{4} \leq \LSD({\bf w},{\bf t}) \leq \USD({\bf w},{\bf t}) \leq \frac{3}{4}.
\]
\end{thm}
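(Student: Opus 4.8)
The plan is to recognize the statement as precisely what the proof of Theorem~\ref{thm:main} actually establishes. That proof never used overlap-freeness except through the four-way classification of Fife paths, and the displayed union in the hypothesis simply hands us that classification: its four summands are, in order, the path-shapes treated in Cases 1--4 of the proof of Theorem~\ref{thm:main}. So the task is to check that each summand falls into the matching case and that every lemma invoked there is a statement about $\FBE$-images and iterated Thue-Morse blocks, not about overlap-free words.

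First I would handle the ``ends in ${\tt 0}^\omega$'' summand $\{{\tt 1},{\tt 2},{\tt 3},{\tt 4}\}\Sigma_5^*{\tt 0}^\omega$. Put $z=\prod_{k\geq0}\mu^k(c({\bf x}[k]))$: this is a finite word because ${\bf x}$ ends in ${\tt 0}^\omega$, and $|z|\geq1$ because ${\bf x}[0]\neq{\tt 0}$ forces $c({\bf x}[0])\neq\epsilon$. Thus ${\bf w}=\FBE({\bf x},a)=z\,\mu^\omega(a)\in\{z{\bf t},z\overline{\bf t}\}$ with $z\in\Sigma_2^+$, and Observation~\ref{obsv:lusd-ignore-finite} reduces the claim to evaluating $(\LSD,\USD)({\bf t},{\bf t}[\,|z|\dotdot\infty])$ or $(\LSD,\USD)(\overline{\bf t},{\bf t}[\,|z|\dotdot\infty])$, which lie in $([\tfrac{1}{4},\tfrac{3}{4}],[\tfrac{1}{4},\tfrac{3}{4}])$ by Corollary~\ref{cor:technical-infinite-t} and Corollary~\ref{cor:complement-lusd} --- the inequality $|z|\geq1$ being exactly what makes Corollary~\ref{cor:technical-infinite-t} applicable. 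A path from the third or fourth summand that happens to end in ${\tt 0}^\omega$ still contains some nonzero symbol, hence also gives $|z|\geq1$ and is absorbed here.

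For the second summand, the paths ${\tt 0}^n{\tt 2}({\tt 31})^\omega$ and ${\tt 0}^n{\tt 4}({\tt 13})^\omega$, I would apply Proposition~\ref{prop:edge} directly on the ${\tt 2}({\tt 31})^\omega$ branch and combine it with the identity $\FBE({\tt 0}^n{\tt 4}({\tt 13})^\omega)=\overline{\FBE({\tt 0}^n{\tt 2}({\tt 31})^\omega)}$ and Corollary~\ref{cor:complement-lusd} on the other, getting $(\LSD,\USD)({\bf w},{\bf t})=(\tfrac{1}{3},\tfrac{2}{3})\subset([\tfrac{1}{4},\tfrac{3}{4}],[\tfrac{1}{4},\tfrac{3}{4}])$. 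For the third and fourth summands I would, as in Cases 3 and 4, read the ${\tt 0}$-exponents and the positions of the later nonzero symbols off of ${\bf x}$ and compute $\FBE({\bf x})$ block by block; the upshot is a $k$ with ${\bf w}[2^k\dotdot\infty]\in\{t_{k+1},\overline{t_{k+1}}\}\{t_{k+2},\overline{t_{k+2}}\}^\omega$ for the third summand, and integers $k,\ell$ with ${\bf w}\in\{t_k,\overline{t_k}\}\{t_{k+\ell+1},\overline{t_{k+\ell+1}}\}\{t_{k+\ell+2},\overline{t_{k+\ell+2}}\}^\omega$ for the fourth. Here one uses that each $t_j$ or $\overline{t_j}$ is a concatenation of copies of $t_m$ and $\overline{t_m}$ for every $m\leq j$ (since $t_j=\mu^m(t_{j-m})$), so arbitrarily long runs of ${\tt 0}$ in the $\{{\tt 0},{\tt 1},{\tt 3}\}^\omega$ tails only delete blocks without disturbing the eventual $\{t_m,\overline{t_m}\}^\omega$-structure, and that ${\bf x}$ not ending in ${\tt 0}^\omega$ keeps ${\bf w}$ genuinely infinite. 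After discarding a prefix of length $2^{k+2}$ (resp.\ $2^{k+\ell+2}$) via Observation~\ref{obsv:lusd-ignore-finite}, both ${\bf t}$ and ${\bf w}$ become elements of one $\{t_m,\overline{t_m}\}^\omega$, with ${\bf w}$ shifted by $i=2^k$ (resp.\ $i=2^{k+\ell+1}-2^k$); since $\gcd(i,2^m)=2^k\leq 2^{m-2}$ in each case, Corollary~\ref{cor:technical-infinite} gives $[\tfrac{1}{4},\tfrac{3}{4}]$.

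I expect no genuine obstacle: the statement is tailored so that the Theorem~\ref{thm:main} machinery goes through, so the real effort is bookkeeping --- matching each summand to its case, confirming that the blockwise evaluation of $\FBE$ survives the general ${\tt 0}$-laden tails in the third and fourth summands (including the optional trailing ${\tt 3}$ or ${\tt 1}$), and verifying the $\gcd$ side conditions for Corollary~\ref{cor:technical-infinite} (they hold, with room once $\ell\geq1$). Nothing anywhere appeals to the overlap-free property or to Theorem~\ref{thm:ovlf-words}; in fact there is no new ingredient at all, only the elementary observation that the admissible ${\bf x}$ in the hypothesis are exactly the union of the four path families handled in the proof of Theorem~\ref{thm:main}.
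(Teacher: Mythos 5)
Your proposal is correct and matches the paper's own treatment: the paper proves Theorem~\ref{thm:main+} exactly by remarking that the proof of Theorem~\ref{thm:main} never uses overlap-freeness except through the four-way classification of Fife paths, so it applies verbatim to the four summands listed in the hypothesis. Your additional bookkeeping (the nonzero first symbol giving $z\in\Sigma_2^+$ in place of the exclusion ${\bf w}\notin\{{\bf t},\overline{\bf t}\}$, and routing ${\tt 0}^\omega$-tailed members of the third and fourth summands through the Case~1 argument) is precisely the implicit content of that remark.
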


Note that Theorem~\ref{thm:main+} is indeed more general than Theorem~\ref{thm:main}, since, for example, ${\tt 13}^\omega$ is not a valid path in the Fife automaton for overlap-free infinite binary words (indeed, $\FBE({\tt 13}^\omega)$ begins with the overlap ${\tt 01010}$) and $\FBE({\tt 13}^\omega)$ also is not just a shift of ${\bf t}$ or $\overline{\bf t}$, but Theorem~\ref{thm:main+} nevertheless implies that $\frac{1}{4} \leq \LSD(\FBE({\tt 13}^\omega),{\bf t}) \leq \USD(\FBE({\tt 13}^\omega),{\bf t}) \leq \frac{3}{4}$.

Together, Conjecture~\ref{conj:main+} and Theorem~\ref{thm:main+} suggest the following more general question.

\begin{quest}
For each $n \in \bbN \delete \{0,1\}$, $r,s \in [0,1]$, and ${\bf x} \in \Sigma_n^\omega$, let
\[
S_{n,r,s}({\bf x}) \coloneq \{ {\bf y} \in \Sigma_n^\omega \st r \leq \LSD({\bf x}, {\bf y}) \leq \USD({\bf x}, {\bf y}) \leq s \}.
\]

What are $S_{2,\frac{1}{4},\frac{3}{4}}({\bf t})$ and $S_{2,\frac{1}{3},\frac{2}{3}}({\bf t})$?
\end{quest}

Another avenue of investigation is to consider what makes ${\bf t}$ so special in the sense of Theorem~\ref{thm:main}.
As mentioned in the introduction, Theorem~\ref{thm:main} is false if we replace ${\bf t}$ with an arbitrary overlap-free infinite binary word.
However, perhaps there are specific words other than ${\bf t}$ and $\overline{\bf t}$ that do share similar properties.
In other words, we raise the following question.

\begin{quest}
Let $\calO$ denote the set of all overlap-free infinite binary words.

What is $\{ {\bf x} \in \Sigma_2^\omega \st \calO \subseteq S_{2,\frac{1}{4},\frac{3}{4}}({\bf x}) \}$?
What if we replace $\frac{1}{4},\frac{3}{4}$ with $\frac{1}{3},\frac{2}{3}$?
\end{quest}

A third avenue of investigation is to consider what occurs in words that avoid higher powers in place of being overlap-free (which are essentially $(2+\epsilon)$- or $2^+$-powers).
In fact, there is a Fife automaton characterizing $\frac{7}{3}$-power-free infinite binary words having the same encoding mechanism as the Fife automaton for overlap-free infinite binary words but with more states and different transitions \cite{Blondel&Cassaigne&Jungers:2009,Rampersad&Shallit&Shur:2011}. 
However, initial inspection of the automaton for $\frac{7}{3}$-power-free infinite binary words suggests that our proof of Theorem~\ref{thm:main} cannot be extended to account for all $\frac{7}{3}$-power-free infinite binary words because there are many more edges labeled ${\tt 2}$ and ${\tt 4}$ in the Fife automaton for $\frac{7}{3}$-power-free infinite binary words, resulting in valid paths that contain infinitely many ${\tt 2}$s and ${\tt 4}$s, but our proof of Theorem~\ref{thm:main} heavily relied on there being at most one occurrence of ${\tt 2}$ or ${\tt 4}$ (which must be preceeded by a string of ${\tt 0}$s if it occurs) in the path taken through the automaton so that the infinite binary word corresponding to the path eventually ``lags behind'' the prefixes $t_n$ of ${\bf t}$ in the sense that each successive $n$\superscript{th} symbol in the path can only generate positions prior to $2^n$.
Nevertheless, computational evidence suggests that Theorem~\ref{thm:main} and even Conjecture~\ref{conj:main+} can be generalized even further.

\begin{conj} \label{conj:main++}
For all $\frac{7}{3}$-power-free ${\bf w} \in \Sigma_2^\omega \delete \{{\bf t}, {\bf \overline{t}}\}$, $\frac{1}{3} \leq \LSD({\bf w},{\bf t}) \leq \USD({\bf w},{\bf t}) \leq \frac{2}{3}$.
\end{conj}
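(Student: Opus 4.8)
The plan is to follow the same overall strategy as in the proof of Theorem~\ref{thm:main}: represent each $\frac73$-power-free infinite binary word ${\bf w}$ as $\FBE$ of a valid infinite path through the Fife automaton for $\frac73$-power-free infinite binary words \cite{Blondel&Cassaigne&Jungers:2009,Rampersad&Shallit&Shur:2011}, and then bound $\LSD({\bf w},{\bf t})$ and $\USD({\bf w},{\bf t})$ by analysing the path. As in the proof of Theorem~\ref{thm:main}, the letters ${\tt 0},{\tt 1},{\tt 3}$ read at depth $n$ contribute a block of length $0$, $2^n$, or $2^n$ to ${\bf w}$ that is $\epsilon$, $t_n$, or $\overline{t_n}$; the letters ${\tt 2},{\tt 4}$ contribute a block of length $2^{n+1}$ equal to $t_nt_n$ or $\overline{t_n}\,\overline{t_n}$. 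Since ${\bf t}=\mu^N({\bf t})\in\{t_N,\overline{t_N}\}^\omega$ for every $N$, comparing long prefixes of ${\bf w}$ and ${\bf t}$ reduces, block by block at scale $2^N$, to evaluating $\SD$ of $t_N$ or $\overline{t_N}$ against a length-$2^N$ factor of a concatenation of two consecutive $t$-blocks, exactly the quantities controlled by Lemma~\ref{lem:technical-one} and Corollaries~\ref{cor:technical-finite}--\ref{cor:technical-infinite}. First I would dispose of the ``easy'' paths: those ending in ${\tt 0}^\omega$ (Theorem~\ref{thm:mahler-lusd}, or Corollary~\ref{cor:technical-infinite-t}) and those with only finitely many ${\tt 2}$s and ${\tt 4}$s that eventually ``lag behind'' the prefixes $t_n$, for which the Theorem~\ref{thm:main} argument already gives the window $[\tfrac14,\tfrac34]$, and then try to sharpen the window to $[\tfrac13,\tfrac23]$.

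Next, I would isolate the genuinely new phenomenon: valid paths in the Fife automaton for $\frac73$-power-free words can contain infinitely many ${\tt 2}$s and ${\tt 4}$s (the reason, noted after Theorem~\ref{thm:main+}, that the present proof does not extend), so ${\bf w}$ need not lag behind ${\bf t}$ and the scale-$2^N$ block decomposition of ${\bf w}$ is not eventually a concatenation of $\{t_N,\overline{t_N}\}$ alone. The key sub-problem is therefore to understand, for a word ${\bf w}=\prod_n\mu^n(c({\bf x}[n]))$ with ${\bf x}$ having infinitely many ${\tt 2}$/${\tt 4}$'s, the sequence of ``phase offsets'' of the $\mu^N$-block grid of ${\bf w}$ relative to that of ${\bf t}$, and to show that these offsets cannot stay trapped, for a positive-density set of blocks, in the exceptional shift classes of Lemma~\ref{lem:technical-one} (essentially $i\equiv 2^{N-1}\pmod{2^N}$ and the analogous shifts where $\SD$ attains $\tfrac14$ or $\tfrac34$). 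This is where the $\frac73$-power-free hypothesis must be used: I would show that a path whose induced offsets repeatedly return to such a bad class would force ${\bf w}$, or the path itself read through the automaton, to contain a $\frac73$-power, contradicting validity.

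Concretely, I would set up an amortized/potential argument. Define, along the path, the running similarity density $D_m \coloneq \SD({\bf w}[0\dotdot\ell_m-1],{\bf t}[0\dotdot\ell_m-1])$, where $\ell_m$ is the total length produced by the first $m$ symbols, and use Observation~\ref{obsv:sd-weighted-avg} to express the increments of $D_m$ in terms of the per-block $\SD$-values above. The claim to be proved is that the increments whose value lies in $[\tfrac14,\tfrac13)\cup(\tfrac23,\tfrac34]$ occur with density $0$ along any valid path; combined with $[\tfrac13,\tfrac23]$ bounds on all other increments (after possibly grouping pairs of blocks, as in Proposition~\ref{prop:h}) and Observation~\ref{obsv:lusd-fixed-finite-interval}, this yields the conjecture. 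I would corroborate the combinatorial step with the automaton computations already referenced in the future-work discussion, and, if a full proof of the density-$0$ claim resists, fall back to establishing at least the $[\tfrac14,\tfrac34]$ analogue of Theorem~\ref{thm:main} for $\frac73$-power-free words, which needs only boundedness of the increments and not their refinement.

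The hard part --- and the reason this is stated as a conjecture --- is precisely the combinatorial lemma that the $\frac73$-power-free condition forbids a positive-density return to the bad shift classes: the exceptional configurations in Lemma~\ref{lem:technical-one} really are attained (e.g.\ $\SD(t_2,t_3[1\dotdot4])=\tfrac14$, and more generally $\SD(t_{n+2},t_{n+3}[2^n\dotdot 2^{n+2}+2^n-1])=\tfrac14$), so there is no scale at which they are avoided automatically; one must argue globally that a valid path cannot chain infinitely many of them together, and the larger, more permissive Fife automaton for $\frac73$-power-free words, with its extra ${\tt 2}$/${\tt 4}$ edges, makes this global avoidance argument substantially more delicate than anything needed for Theorem~\ref{thm:main}.
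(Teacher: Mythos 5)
There is a genuine gap here, and in fact an unavoidable one: the statement you were asked to address is Conjecture~\ref{conj:main++}, which the paper does \emph{not} prove. The authors offer only computational evidence for it, and they explicitly explain why their own method fails: valid paths in the Fife automaton for $\frac{7}{3}$-power-free words may contain infinitely many {\tt 2}s and {\tt 4}s, whereas the proof of Theorem~\ref{thm:main} relies on at most one such letter so that the encoded word eventually ``lags behind'' the prefixes $t_n$; moreover, the bounds $\frac{1}{4}$ and $\frac{3}{4}$ in Lemma~\ref{lem:technical-one} and Corollary~\ref{cor:technical-finite} are attained (e.g.\ $\SD(t_2,t_3[1\dotdot 4])=\frac{1}{4}$), so no refinement to $[\frac{1}{3},\frac{2}{3}]$ can come from that lemma alone. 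Your proposal correctly identifies this as the obstruction, but it does not overcome it: the entire argument funnels into the claim that $\frac{7}{3}$-power-freeness forces the ``bad'' block comparisons (or bad phase offsets) to occur with density zero along any valid path, and you state this claim without proof. That claim \emph{is} the conjecture, for all practical purposes, so the proposal is a research plan rather than a proof.

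Two further points where the plan as written would need repair even for its weaker fallback goal. First, the fallback $[\frac{1}{4},\frac{3}{4}]$ bound is not automatic either: with infinitely many {\tt 2}s and {\tt 4}s the cumulative block lengths $S_k=\sum_{n<k}\abs{\mu^n(c({\bf x}[n]))}$ can satisfy $S_k\equiv 0$ or $2^{k-1}\pmod{2^k}$ at \emph{every} scale $k$ (this already happens for ${\bf h}=\FBE({\tt 2(31)}^\omega)$, where $S_k=2^k$), and in those aligned or half-aligned regimes Corollary~\ref{cor:technical-infinite} gives no information because the per-block similarity densities sit at $0$, $\frac{1}{2}$, or $1$; the paper handles the only such paths in the overlap-free automaton by the ad hoc two-block grouping of Proposition~\ref{prop:h}/Proposition~\ref{prop:edge}, and the larger automaton presumably contains many more such families needing their own treatment. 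Second, even granting the density-zero claim for increments in $[\frac{1}{4},\frac{1}{3})\cup(\frac{2}{3},\frac{3}{4}]$, you would still have to control the aligned blocks with per-block $\SD\in\{0,1\}$, which your bookkeeping via Observation~\ref{obsv:sd-weighted-avg} does not address. So the honest assessment is: the approach is a reasonable way to attack an open problem, but nothing in it constitutes a proof, and no proof exists in the paper to compare it against.
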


Finally, we revisit the notion, already mentioned in Remark~\ref{rem:density-generalization}, that $\LSD$ and $\USD$ are not new ideas, and not just in number theory.
In fact, $1 - \LSD$ is a pseudometric on $\Sigma^\bbN$, called the Besicovitch pseudometric, which has already been studied from the perspective of discrete dynamical systems such as \cite{Blanchard&Formenti&Kurka:1997}.
Also studied in \cite{Blanchard&Formenti&Kurka:1997} is the Weyl pseudometric, which suggests the following slightly different notion of similarity density, considering all blocks of a given size instead of just blocks from the beginning.
\begin{align*}
\LSD_{\text{Weyl}}({\bf x},{\bf y}) &= \liminf_{n\to\infty} \ \inf_{k\in\bbN} \SD({\bf x}[k\dotdot k+n-1],{\bf y}[k\dotdot k+n-1]), \\
\USD_{\text{Weyl}}({\bf x},{\bf y}) &= \limsup_{n\to\infty} \ \sup_{k\in\bbN} \SD({\bf x}[k\dotdot k+n-1],{\bf y}[k\dotdot k+n-1]).
\end{align*}
With this notion of Weyl similarity density, analogous to the Besicovitch case, we have that $1 - \LSD_{\text{Weyl}}$ is the Weyl pseudometric.
The Besicovitch and Weyl pseudometrics share some topological properties, but
the Besicovitch pseudometric is complete while the Weyl pseudometric is not \cite{Blanchard&Formenti&Kurka:1997}.
This fact suggests one might be able to shed further light on some of
the questions above by also considering the Weyl similarity density; perhaps
several different
notions of similarity density, when taken together, can characterize
the overlap-free infinite binary words.

\bigskip
\noindent{\it Acknowledgments.}  We are grateful to Chao Hsien Lin for having
suggested the question we study here.  We thank the referees for a careful reading,
and Joel Ouaknine and Stefan Kiefer for having pointed out the 
paper \cite{Blanchard&Formenti&Kurka:1997}.

\bibliographystyle{eptcs}
\bibliography{abbrevs,du9}

\end{document}